\DeclareMathOperator*{\argmin}{argmin}
\DeclarePairedDelimiter{\altket}{\vert}{\rangle}
\algnewcommand\algorithmicinput{\textbf{Input:}}
\algnewcommand\Input{\item[\algorithmicinput]}
\algnewcommand\algorithmicoutput{\textbf{Output:}}
\algnewcommand\Output{\item[\algorithmicoutput]}
\algnewcommand\algorithmicruntime{\textbf{Runtime:}}
\algnewcommand\Runtime{\item[\algorithmicruntime]}
\newcommand{\expt}[3]{\left\langle #1 \middle| #2 \middle| #3 \right\rangle}
\newcommand{\exptshort}[3]{\bra{#1} #2 \ket{#3}}
\theoremstyle{plain}
\newtheorem*{theorem*}{Theorem}
\newcommand{\labeltext}[3][]{%
    \@bsphack%
    \csname phantomsection\endcsname
    \def\tst{#1}%
    \def\labelmarkup{\emph}
    \def\refmarkup{}%
    \ifx\tst\empty\def\@currentlabel{\refmarkup{#2}}{\label{#3}}%
    \else\def\@currentlabel{\refmarkup{#1}}{\label{#3}}\fi%
    \@esphack%
    \labelmarkup{#2}
}
\begin{document}

\title{Scalable Quantum Computation of Highly Excited Eigenstates with Spectral Transforms}
\author{Shao-Hen Chiew}
\affiliation{Department of Physics, Faculty of Science
National University of Singapore
Blk S12 Level 2, Science Drive 3
Singapore 117551}
\affiliation{Centre for Quantum Technologies, National University of Singapore, 3 Science Drive 2, Singapore 117543}
\author{Leong-Chuan Kwek}
\affiliation{Centre for Quantum Technologies, National University of Singapore, 3 Science Drive 2, Singapore 117543}
\affiliation{MajuLab, CNRS-UNS-NUS-NTU International Joint Research Unit, Singapore UMI 3654, Singapore}
\affiliation{National Institute of Education, Nanyang Technological University, Singapore 637616, Singapore}
\affiliation{Quantum Science and Engineering Center (QSec), Nanyang Technological University, Singapore}
\begin{abstract}
	We propose a natural application of Quantum Linear Systems Problem (QLSP) solvers such as the HHL algorithm to efficiently prepare highly excited interior eigenstates of physical Hamiltonians in a variational and targeted manner. This is enabled by the efficient computation of the expectation values of inverse Hamiltonians on quantum computers, in situations where Hamiltonian simulation and the representation of eigenstates on quantum computers are efficient. Importantly, the usage of the QLSP solver as a subroutine within our algorithm -- with its inputs and outputs corresponding to physically meaningful objects such as Hamiltonians and eigenstates arising from physical systems -- does not conceal exponentially costly pre/post-processing steps that usually accompanies it in generic linear algebraic applications. We detail implementations of this scheme for both fault-tolerant and near-term quantum computers, analyze their efficiency and implementability, and detail conditions under which the QLSP solvers' exponentially better scaling in problem size render it advantageous over existing classical and quantum approaches. Simulation results for applications in many-body physics and quantum chemistry further demonstrate its effectiveness and scalability over existing approaches.
\end{abstract}

\maketitle

\section{Introduction}
The study of the spectral properties of physical systems plays a central role in physics, chemistry, and materials, and constitutes one of the leading candidates for quantum computers to display an advantage over classical approaches \cite{bauer2020quantum,mcclean2016theory,preskill2018quantum,cerezo2021variational,bharti2021noisy,eisert2020quantum}. While the search for practical problems that definitively satisfy such a claim remains an ongoing work \cite{babbush2021focus,lee2023evaluating}, the identification of problems and algorithms where quantum computers have the potential to yield an advantage remains a highly pertinent issue.

Eigenstate preparation is an important task within this category of problems. For the case of ground states, the problem is generically difficult for a large class of local Hamiltonians even for quantum computers \cite{kempe2006complexity}, but the presence of additional structure -- such as knowledge of the ground state energy and gap, and the availability of initial quantum states with reasonable overlaps with the ground state -- renders the problem tractable, and gives way to the design of more efficient algorithms, which remains an active area of research \cite{ge2019faster,lin2020near}. Variational quantum eigensolvers (VQE) constitute another interesting approach that are consistent with near-term NISQ devices due to their low memory and depth requirements \cite{peruzzo2014variational}. While these algorithms lack vigorous runtime guarantees due to their variational nature, they can supply quantum states with high overlap with the ground state at low cost, 
effectively acting as subroutines of more powerful algorithms described above.

On the other hand, the case for highly excited eigenstates has received less attention, and is generally more challenging than the ground state problem \cite{gonzalez2012progress,westermayr2020machine}. Nonetheless, the development of theoretical and computational methods to study excited states remain crucial for a plethora of reasons, from the understanding of the ergodic-localization phase transition of disordered many-body systems \cite{bauer2014analyzing,smith2016many,pietracaprina2018shift}, the calculation of reaction rates and binding energies of molecules \cite{wang1994solving,reiher2017elucidating,cao2019quantum}, vibrational spectroscopy \cite{baiardi2019optimization} and photochemistry \cite{lindh2020quantum}, to an understanding of biological processes such as photosynthesis \cite{cerullo2002photosynthetic} and human vision \cite{herbst2002femtosecond}. Classically, when analytical or approximative treatments are inadequate, one resorts to adaptations of ground state methods such as Lanczos methods combined with spectral transforms \cite{pietracaprina2018shift} and machine learning methods \cite{westermayr2020machine}, but they are ultimately hampered by demanding memory and time requirements that generically scale exponentially with system size, especially when considering interior eigenstates located deep within the bulk of the spectrum.

This prompted the question of whether quantum algorithms can fare better. Along this direction, a quantum algorithm adopting a filtering approach was recently introduced, which requires similar structures as ground state methods described above, and high-overlap initial states obtained through approaches such as those based on adiabatic quantum computing \cite{lin2020optimal}. On near-term quantum computers, despite recent developments \cite{peruzzo2014variational,mcclean2017hybrid,santagati2018witnessing,parrish2019quantum,higgott_variational_2019,nakanishi2019subspace,zhang2021adaptive,mondal2023ground}, most approaches can only explore a limited number of eigenstates close to the ground state, lacking scalability for highly excited eigenstates. This mirrors the limitations of classical approaches based on iteratively orthogonalizing against lower-energy eigenstates, which becomes prohibitively expensive when targeting highly excited interior eigenstates of large systems \cite{dorando2007targeted,baiardi2019optimization}.

The application of quantum computers to linear algebraic tasks constitutes another promising avenue with wide-reaching applications. Among them is the solution of linear systems of equations and its quantum variant, the Quantum Linear Systems Problem (QLSP), which, given a $N \times N$ dimensional matrix $A$ and a $N$ dimensional vector $\vec{b}$ seeks the preparation of a quantum state $\ket{x} = \sum_{i=1}^N x_i \ket{i} / \lVert \sum_{i=1}^N x_i \ket{i} \rVert $ that encodes the solution $\vec{x} = (x_1,...,x_N)^\top$ of the linear system $A\vec{x} = \vec{b}$. An efficient solution for the QLSP on quantum computers was first proposed by Harrow,
Hassidim and Lloyd (HHL) \cite{harrow2009quantum}, which yielded an exponential improvement in the scaling in $N$ over known general classical algorithms. Since then, a flurry of work have resulted in improved algorithms with better scalings \cite{berry2015hamiltonian,childs2017quantum,low2017optimal,gilyen2019quantum,an2022quantum,costa2022optimal}. QLSP solvers that are feasible in near-term devices have also been proposed and tested on current quantum computers \cite{subacsi2019quantum,bravo2019variational,xu2021variational,huang2021near}. 

However, the exponential improvement over classical solvers comes with several important caveats that prevent straightforward application \cite{aaronson2015read}. Besides requiring $A$ to be sparse, well-conditioned, and efficiently simulable, QLSP solvers output a quantum state, only allowing efficient access to its statistical properties such as expectation values. Access to wavefunction amplitudes require further processing, which may scale as the exponentially many number of amplitudes if complete information is desired. The encoding of the entries of $\vec{b}$ as amplitudes of a quantum state $\ket{b}$ is also non-trivial, facing the same complexity for unstructured inputs. Proposed applications of QLSP solvers must therefore conform to these potentially limiting requirements to truly harness the exponential improvement, of which there are limited instances \cite{wiebe2012quantum,clader2013preconditioned,chowdhury2016quantum,wang2017efficient}. Indeed, the search for such applications constitutes an interesting and difficult problem in itself.

In this work, we propose and study an algorithm which prepares highly excited eigenstates of physical and chemical Hamiltonians on both near-term and fault-tolerant error-corrected quantum computers. Inspired by classical targeted eigenvector algorithms based on spectral transformations, it prepares arbitrary excited eigenstates near a specified target energy in a variational manner, which is enabled by the computation and optimization of the expectation value of a shift-inverted Hamiltonian via the usage of QLSP solvers. Besides the algorithm's pragmatic value to the eigenstate preparation problem, it also presents a highly natural context for the application of QLSP solvers. By taking $A$ and $\ket{b}$ to be Hamiltonians and quantum states arising from physical contexts and only requiring the extraction of statistical properties from the solution $\ket{x}$, we find that exponentially costly processing steps that usually accompany QLSP solvers in generic linear algebraic applications are avoided.

Furthermore, we evaluate conditions and settings under which the proposed algorithm can outperform existing methods (both classical and quantum), which depends primarily on whether the representation of target eigenstates and Hamiltonian simulation can be performed more efficiently on quantum computers. This leads us to the conclusion that problems arising from physical contexts such as chemistry and many-body physics constitute promising candidates that enable our proposed algorithm to outperform classical implementations of a similar procedure. Numerical simulations of our algorithm applied to problems in the context of quantum chemistry and disordered many-body localized (MBL) systems further supplement our arguments, with results indicating superior performance over existing methods for the same task. We also extensively discuss our algorithm's relation and advantages over other existing quantum algorithms, most of which require exponentially many circuit executions with increasing system size when targeting highly excited eigenstates deep in the spectrum. Instead, our approach prepares eigenstates in a targeted manner, with an efficiency that depends mainly on the local density of states near the target, rendering it particularly favorable for applications in the near-term.

The article is structured as follows. Section.~\ref{sec:algorithm} acts as the central section of the work, which provides a general description of the proposed algorithm and task. It highlights the algorithm's main features and components in an implementation-independent manner, while referencing all other sections of the text. Focusing on the core QLSP solver subroutine, Section.~\ref{sec:implementations} details the fault-tolerant and near-term implementations of the algorithm (at Sections.~\ref{sec:fault_tol} and \ref{sec:near_term} respectively), followed by analyses of their complexities and implementational costs (at Sections.~\ref{sec:fault_tol_anal} and \ref{sec:near_term_analysis} respectively). Section.~\ref{sec:other_components} details other components beyond the QLSP solver, which completes the description and analysis of our proposed algorithm. Next, Section.~\ref{sec:advantage} provides a detailed discussed on situations and conditions under which the algorithm can outperform existing classical implementations, while Section.~\ref{sec:relation_quantum} describes advantages and synergies with existing quantum algorithms. Finally, Section.~\ref{sec:applications} describes results of numerical simulations of our algorithm applied to problems arising from quantum chemistry (Section.~\ref{sec:applications_chem}) and many-body localized spin systems (Section.~\ref{sec:applications_spin}). We conclude with a discussion on the implications of our proposed algorithm, future avenues of research, and possible limitations in Section.~\ref{sec:discussion}. We also refer the reader to the Appendices for details, where we review the HHL algorithm (Appendix.~\ref{appendix:hhl}), provide results of additional numerical simulations (Appendix.~\ref{appendix:additional_numerics}), and elaborate on other applications of our results (Appendix.~\ref{appendix:applications}).

\section{The Algorithm} \label{sec:algorithm}

This section outlines the task and the proposed algorithm in a general and implementation-independent manner, which is followed by brief descriptions of its key features and components. References to other sections throughout the text provide further elaboration.

Let $H$ be an $n$-qubit Hamiltonian of dimension $N \times N$ (so that $N = 2^n$), condition number $\kappa$, and sparsity $s$, with eigenstates $\ket{\lambda_i}$ with energies $\lambda_i$. We consider the task of preparing the $k$-th excited eigenstate $\ket{\lambda_k}$ on a quantum computer, with the additional structure that an estimate of $\lambda_k$, which is separated from its closest neighboring eigenvalue by a spectral gap $\Delta$, is known. Denote $\tilde{H} = H-\sigma\mathbb{1}$ the Hamiltonian $H$ shifted by a real constant $\sigma\mathbb{1}$, and $\ket{0...0}$ a quantum state that is easily preparable on a quantum computer.

While the procedure to be described can be applied to any input matrix, we focus on $k$-local Hamiltonians arising from physical contexts, which can be expressed as linear combinations of tensor product of Pauli matrices:
\begin{equation} \label{eq:k-local}
    H = \sum_{i=1}^L c_i h_i = \sum_i c_i \bigotimes_{j=1}^n \sigma_{j}^i,
\end{equation}
where each $\sigma_{j}^i \in \{\mathbb{1}, \sigma^x, \sigma^y, \sigma^z \}$, the total number of terms $L$ is at most polynomial in $n$, and each of the $L$ Pauli terms $h_i = \bigotimes_{j=1}^N \sigma_{j}^i$ in the sum acts non-trivially on at most $k$ qubits. For instance, spin models describing magnetism appear naturally in this form, usually with additional geometrical structure and locality, while fermionic Hamiltonians of molecular systems in chemistry can be cast into this form, e.g. by applying a Jordan-Wigner transformation to a second-quantized electronic Hamiltonian \cite{whitfield2011simulation}. Problems considered in our numerics in Section \ref{sec:applications} also take this form.

The description of the algorithm is then as follows:
\begin{enumerate}
    \item \textbf{Shift selection} : Choose a value $r \in (0,1]$. If the neighbor's energy is higher than $\lambda_k$, set the shift $\sigma = \lambda_k + r \Delta$; otherwise set $\sigma = \lambda_k - r \Delta$.

    \item \textbf{Variational state preparation} : Prepare a parametrized quantum state:
\begin{equation} \label{eq:input_state}
    \ket{b(\theta)} = U(\theta) \ket{0}^{\otimes n} = \sum_i \beta_i \ket{\lambda_i}
\end{equation} 
with a unitary ansatz $U(\theta)$, where $\ket{0}^{\otimes n}$ is an easily preparable quantum state. $U(\theta)$ is chosen to be able to efficiently represent the target eigenstate $\ket{\lambda_k}$, which depends on the form of $H$.
    
    \item \textbf{Inverse expectation estimation} : Compute an approximation to:
    \begin{equation} \label{eq:cost_func}
        C(\theta) = \bra{b(\theta)} \tilde{H}^{-1} \ket{b(\theta)}.
    \end{equation}
    \begin{enumerate}
        \item \textbf{Solution of QLSP} : Obtain the quantum state $\ket{\tilde{x}} \approx \tilde{H}^{-1}\ket{b(\theta)}/ \lVert \tilde{H}^{-1}\ket{b(\theta)} \rVert$ by solving the QLSP with inputs $\tilde{H}$ and $\ket{b(\theta)}$ with a quantum computer up to an error $\epsilon_{\text{HHL}} = O(\Delta)$.
        
        \item \textbf{Expectation estimation} : Compute the expectation value for the observable $H-\sigma\mathbb{1}$ on the output state $\ket{\tilde{x}}$ up to an error $\epsilon_{\text{exp}} = O(\Delta)$, which approximates $C(\theta)$ (up to a normalization constant $\lVert \tilde{H}^{-1}\ket{b(\theta)} \rVert$).
    \end{enumerate}
    
        \item \textbf{Classical optimizer feedback} : With the aid of a classical optimization algorithm, solve the optimization problem:
        \begin{equation} \label{eq:opt_prob}
            \theta^* = \argmin_{\theta} \bra{b(\theta)} \tilde{H}^{-1} \ket{b(\theta)},
        \end{equation}
        i.e. search for optimal parameters $\theta^*$ that optimizes $C(\theta)$ (minimized if $\sigma > \lambda_k$, maximized if $\sigma < \lambda_k$), which consists of repeated executions of Steps 2 and 3. Upon convergence, the optimal parameters $\theta^*$ are stored in classical memory, which allows an approximation of $\ket{\lambda_k}$ to be prepared readily via:
        \begin{equation} \label{eq:solution_prep}
            \ket{\lambda_k} \approx \ket{b(\theta^*)} = U(\theta^*) \ket{0}^{\otimes n}.
        \end{equation}
\end{enumerate}

The core motivation behind the above procedure is the observation that quantum computers enable both the computation of the cost function Eq.~(\ref{eq:cost_func}) and the representation of eigenstates $\ket{\lambda_k}$ to be achieved in a natural, and possibly efficient manner. This allows the ground state problem of a shift-inverted Hamiltonian $\tilde{H}^{-1}$ to be solved -- essentially by applying the variational principle to $\tilde{H}^{-1}$ to solve the optimization problem Eq.~(\ref{eq:opt_prob}) --  which yields an excited eigenstate of the original Hamiltonian $H$, due to the form of the shift-inversion. The output is then a set of parameters $\theta^*$ that can be stored in classical memory, to readily prepare an approximation of the target eigenstate on a quantum computer via Eq.~(\ref{eq:solution_prep}). More generally, the transformation $H \rightarrow (H-\sigma)^{-1}$ is an instance of a \textit{spectral transformation} $H \rightarrow f(H)$, where the function $f$ is chosen to re-order the spectrum of $H$ without changing its eigenstates. It is further chosen such that eigenstates of $H$ with energies closest to $\sigma$ become exterior eigenstates of $f(H)$, so that the solution of the ground state problem of $f(H)$ yields an interior eigenstate of $H$.

Importantly, the usage of the QLSP solver in our context circumvents its usual caveats that are present in generic linear-algebraic applications \cite{aaronson2015read}. Firstly, no expensive pre-processing to map/load the input quantum state $\ket{b(\theta)}$ from a classical vector $\vec{b}$ is involved. Moreover, the QLSP's input $\ket{b(\theta)}$ possess structure, in that it should ultimately correspond to target eigenstates of physical systems, which implies that $U(\theta)$ can be chosen in a physically motivated manner with possibly efficient implementations such as adaptive \cite{grimsley2019adaptive,tang2021qubit} or unitary coupled-cluster type \cite{lee2018generalized,greene2021generalized} ansatzes for chemistry and the Hamiltonian Variational Ansatz for lattice models \cite{wecker2015progress}. Similarly, only easily accessible statistical quantities are extracted from the output quantum state $\ket{x}$, via measurement of the observable $H$ to compute the cost function Eq.~(\ref{eq:cost_func}); no mapping to a vector $\vec{x}$ is involved. Finally, the input to the QLSP solver corresponds to Hamiltonians of the form Eq.~(\ref{eq:k-local}) arising from physical contexts, which are sparse and efficiently simulable (or more generally, they possess efficient input models). For these reasons, a highly natural context for the application of QLSP solvers is presented.

Instead of targeting an eigenstate with a known energy, an alternative usage of the algorithm is to target an eigenstate with energy closest to a shift $\sigma$, where $\sigma$ is selected in physically interesting regions. Depending on the context, strategies to systematically select or guess the shift are known, which we describe in Sections.~\ref{sec:applications_chem} and \ref{sec:applications_spin}. We show how the algorithm can be used in practice for both cases, to (i) directly prepare a target eigenstate when an estimate of its energy is known (see Section.~\ref{sec:applications_chem} for numerical example for a quantum chemistry problem), or to (ii) piece together information on an entire excitation spectrum in a coarse-grained manner, by successively targeting eigenstates at different energy densities without a priori knowledge of their energies (see Section.~\ref{sec:applications_spin} for example to obtain information on the spectrum of an MBL Hamiltonian). Physical properties can subsequently be extracted from prepared eigenstates via efficient measurements of global statistical properties such as expectation values corresponding to dipole moments, magnetization, charge densities and so on, and via quantum quenches for dynamical characterization.

A key step of the procedure is Step 3, the computation of the expectation value of a shift-inverted Hermitian operator Eq.~(\ref{eq:cost_func}) on a quantum computer. This is enabled by the solution of the QLSP with inputs $\tilde{H}$ and $\ket{b(\theta)}$ (Step 3.(a)) which yields:
\begin{equation} \label{eq:qlsp}
    \ket{\tilde{x}} \approx \frac{\tilde{H}^{-1}\ket{b(\theta)}}{ \lVert \tilde{H}^{-1}\ket{b(\theta)} \rVert},
\end{equation}
and the measurement of the observable $\tilde{H}$ on this output state (Step 3.(b)), due to the following relation:
\begin{equation} \label{eq:expt_relation}
\begin{split}
    \bra{b(\theta)} \tilde{H}^{-1} \ket{b(\theta)} &= \lVert \tilde{H}^{-1}\ket{b(\theta)} \rVert^2 \bra{x} \tilde{H}^{\dag} \tilde{H}^{-1} \tilde{H} \ket{x}\\
    &= \lVert \tilde{H}^{-1}\ket{b(\theta)} \rVert^2 \bra{x} \tilde{H} \ket{x},
\end{split}
\end{equation}
where we have used the self-adjointness of $\tilde{H}$. The normalization factor $\lVert \tilde{H}^{-1}\ket{b(\theta)} \rVert^2$ can be extracted separately, depending on the implementation of the QLSP solver, and can generally be expressed as an easily accessible statistical feature of $\ket{x}$ (see Eq.~(\ref{eq:norm_expt}) later). We can further determine an upper limit on the precision required for the computation of $C(\theta)$ by imposing that it must be able to resolve and distinguish between the target eigenstate $\ket{\lambda_k}$ and those of its immediate neighbors $\ket{\lambda_{k \pm 1}}$ -- in other words, to ensure that the variational principle on $\tilde{H}^{-1}$ is respected in the presence of an error in the computation of $C(\theta)$. As we later detail in Section.~\ref{sec:other_components}, this translates to precisions that scale as $O(\Delta)$ for both Step 3.(a) and Step 3.(b), and can be relaxed if a superposition of eigenstates from a fixed energy interval is asked for, instead of targeting an eigenstate. 

Depending on the resources required for the implementation of both steps, the algorithm can either be classified as requiring fault-tolerant error-corrected quantum computers or feasible in near-term devices, which we analyze and describe in the following sections. For the former case, the runtime required by an efficient implementation with the HHL algorithm scales as $\text{poly}(\text{log}(N), s, \kappa, 1/\Delta)$ (where $\text{poly}(...)$ denotes a function that is upper bounded by a polynomial function of its arguments), which we detail in Sections.~\ref{sec:fault_tol} and \ref{sec:fault_tol_anal}. For the latter case, numerical evidence indicates that an implementation based on the Variational Quantum Linear Solver (VQLS) \cite{bravo2019variational} suitable for near-term quantum computers inherits similar scalings, which we detail in Sections.~\ref{sec:near_term} and \ref{sec:near_term_analysis}. The number of qubits required in both cases scale efficiently with system size as $O(\text{log}(N))$. 

More generally, the ability to evaluate the expectation value of the inverse of a Hermitian operator via Eq.~(\ref{eq:expt_relation}) is relevant beyond its usage as the cost function of a variational procedure in our algorithm. For instance, the same quantity appears widely in studies of the dynamics and high temperature properties of strongly correlated quantum systems via Green's function. We expand on this utility, and describe further examples in other areas in physics and chemistry that can benefit from the ability to efficiently prepare highly excited eigenstates of large systems in Appendix.~\ref{appendix:applications}.

We delegate further technical details of the procedure beyond the QLSP subroutine (Step 3.(a)) to Section.~\ref{sec:other_components}, including an analysis of the overall error bounds of Step 3, the expectation estimation and norm computation subroutines (Step 3.(b)), the choice of the constant $r$ (Step 1), and issues regarding the dependence on the spectral gap $\Delta$ at large $n$.

The choice of the variational ansatz $U(\theta)$ that prepares the input state via Eq.~(\ref{eq:input_state}) is also crucial. As mentioned before, this is especially true in our context where the target eigenstate has a clear physical structure that can be exploited, and allows data mapping/loading of the form $\vec{b} \rightarrow \ket{b}$ to prepare the QLSP solver's input to be avoided. Besides having a depth that is at most $\text{poly}(n)$, it should be chosen to be expressible enough such that the target eigenstate is located within the set of quantum states reachable, yet avoid trainability issues arising from barren plateaus \cite{mcclean2018barren,cerezo2021cost}. Motivated initialization strategies and adaptive ansatzes that can potentially circumvent this issue will be relevant \cite{grant2019initialization,grimsley2022adapt}. We provide further elaborations on these points in Section.~\ref{sec:advantage}.

Finally, we point to Sections.~\ref{sec:relation_quantum} and ~\ref{sec:advantage} for an evaluation of the performance and relation of the proposed algorithm, compared to existing quantum and classical algorithms for the same task. We describe key features possessed by our algorithm that circumvent limitations of existing variational quantum algorithms in Section.~\ref{sec:relation_quantum}, and elaborate on situations and conditions under which this algorithm can outperform existing classical implementations in Section.~\ref{sec:advantage}, which is dependent on the whether efficient representation of physical eigenstates and simulation of $H$ on quantum computers can be achieved. This leads us to the conclusion that promising candidates can be found from problems arising from physical contexts such as chemistry, condensed matter, and high-energy physics, and in the regime of large $n$, when the exponential improvement in the scaling in $n$ over classical solvers are able to outscale their polynomially worse scalings in other parameters.

\section{Implementations} \label{sec:implementations}
In the following section, we describe and analyze the resource requirements of two implementations of the algorithm. The \textit{fault-tolerant implementation} involves deep circuits to execute the HHL algorithm, which necessitates fault-tolerant error-corrected devices. The \textit{near-term implementation} is short-depth and noise tolerant, and can therefore be executed on currently available NISQ devices. In our initial discussions on both implementations, we focus on the key (and most demanding) step of the procedure -- the solution of the QLSP for Eq.~(\ref{eq:qlsp}) at Step 3.(a) -- and delegate discussion on the rest (expectation estimation, norm computation, choice of shift parameter, etc.) to Section.~\ref{sec:other_components}.

\subsection{Fault-tolerant implementation} \label{sec:fault_tol}
With access to fault-tolerant error-corrected quantum computers, a plethora of QLSP solvers with rigorous performance guarantees can be employed to obtain the solution of the QLSP Eq.~(\ref{eq:qlsp}). For concreteness, we choose the well-studied HHL algorithm, adopting similar assumptions as HHL \cite{harrow2009quantum}. It involves time evolution $e^{-iHt}$ as the input model for $H$, which is efficient for local Hamiltonians of the form Eq.~(\ref{eq:k-local}) considered in our work (via e.g. Trotter decompositions \cite{lloyd1996universal}). Beyond the HHL algorithm, improved QLSP solvers have been found and should be used in practice, involving techniques such as linear combination of unitaries (LCU) \cite{berry2015hamiltonian,childs2017quantum}, quantum singular value transformation (QSVT) \cite{low2017optimal,gilyen2019quantum} and adiabatic quantum computing \cite{lin2020optimal,an2022quantum,costa2022optimal}. Across these variants, the logarithmic scaling in $N$ and polynomial scaling in $\kappa$ and $s$ remain qualitatively similar, while the scaling in $1/\epsilon_{\text{HHL}}$ can be exponentially improved. We provide a review of the HHL algorithm and establish its notations in Appendix.~\ref{appendix:hhl}, with its modified quantum circuit displayed in Fig.~(\ref{fig:si_hhl}).

To adapt the HHL algorithm for shift-inversion, the QPE step is performed as per usual, while the eigenvalue inversion subroutine -- consisting of a $R_\text{y}(\phi)$ Pauli rotation controlled by qubits of the eigenvalue register with $\phi = \text{arccos}(C/\tilde{\lambda})$ -- is modified by choosing $\phi = \text{arccos}(C/(\tilde{\lambda} - \sigma))$. This step can be implemented efficiently with standard arithmetic circuits, using resources that scale only polynomially with precision (which we hide) \cite{lin2022lecture,vazquez2022enhancing}. This results in the transformation:
\begin{equation}
\begin{split}
    \sum_i \beta_i \altket{\tilde{\lambda}_i} \ket{\lambda_i} \ket{0} \rightarrow & \sum_i \beta_i \sqrt{1 -  \frac{C^2}{(\tilde{\lambda}_i - \sigma)^2}} \altket{\tilde{\lambda}_i} \ket{\lambda_i} \ket{0} \\
    & + \beta_i \frac{C}{\tilde{\lambda}_i-\sigma} \altket{\tilde{\lambda}_i} \ket{\lambda_i} \ket{1},
\end{split}
\end{equation}
which is followed by the uncomputation of the eigenvalue register. Post-selecting the `1' states of the ancilla qubit, which succeeds with probability:
\begin{equation} \label{eq:proba}
    p_1 = \left\Vert \sum_i C \frac{\beta_i}{\tilde{\lambda}_i - \sigma} \ket{\lambda_i} \right\Vert^2 \approx C^2 \| \tilde{H}^{-1} \ket{b(\theta)} \|^2,
\end{equation} 
then yields:
\begin{equation}
    \frac{1}{\| \tilde{H}^{-1} \ket{b(\theta)} \|} \sum_i \frac{\beta_i}{\tilde{\lambda}_i-\sigma} \altket{\tilde{\lambda}_i} \equiv \ket{\tilde{x}},
\end{equation}
which approximates the solution Eq.~(\ref{eq:qlsp}). The choice of the constant $C$ must also be modified accordingly to satisfy unitarity while at the same time remain large to maximize the success probability. This results in:
\begin{equation} \label{eq:gap}
    C = \min_i \{ |\lambda_i - \sigma|\} = \min \{ r\Delta, (1-r)\Delta \}.
\end{equation}
Due to Eq.~(\ref{eq:proba}), the norm $\| \tilde{H}^{-1} \ket{b(\theta)} \|$ can also be directly computed by first estimating the success probability $p_1$ via sampling, in addition to methods described in Section.~\ref{sec:other_components} (such as via Eq.~(\ref{eq:norm_expt})).

This step needs to be executed up to error $\epsilon_{\text{HHL}} = O(\Delta)$ in order for the target eigenstate and its neighbors to be distinguished (following the error analysis to be detailed in Section.~\ref{sec:other_components}, so that Eq.~(\ref{eq:epsilon_condition}) is satisfied). Combining this requirement with the complexity of the HHL algorithm (Section.~\ref{sec:fault_tol_anal}, Eq.~(\ref{eq:ft_depth})), we find a final runtime of $\text{poly}(\text{log}(N), s, \kappa, 1/\Delta)$.

Focusing on the spectral gap $\Delta$, the depth of the HHL circuit scales as $O(1/\Delta)$, while $1/p_1$ scales as $O(1/\Delta^2)$, resulting in an overall $O(1/\Delta^3)$ runtime. Notably, the position of the target eigenstate in the spectrum does not directly affect the resulting complexity, unlike existing approaches based on iterative orthogonalization \cite{santagati2018witnessing,higgott_variational_2019,nakanishi2019subspace} (c.f. Section.~\ref{sec:relation_quantum}). Rather, the complexity of this step is determined by the local spectral density near $\lambda_k$. Amplitude amplification is expected to further improve $p_1$ by a quadratic factor, to $\Omega(\Delta)$ in the worst case. 

Notably, in the context of our algorithm where variational optimization of the shift-inverted cost function Eq.~(\ref{eq:cost_func}) leads to the convergence of $\ket{b(\theta)}$ to the target eigenstate, $p_1$ increases as optimization progresses, tending to unity if the target eigenstate can be represented exactly by the variational ansatz (from Eq.~(\ref{eq:proba}), $p_1$ approaches unity as the input quantum state $\ket{b(\theta)}$ concentrates at the target eigenstate, i.e. $\beta_k \rightarrow 1$). In practice, this alleviates the dependence on $\Delta$ as optimization progresses, so the runtime of this step will typically be much better than its worst case complexity. Conversely, the value of $p_1$ can be used as a `witness' for the closeness of $\ket{b(\theta)}$ to the intended solution, and therefore as an alternative cost function to be maximized.

For the task of estimating statistical quantities by sampling from the solution of a QLSP (as is our case, for the computation of $\bra{b(\theta)} \tilde{H}^{-1} \ket{b(\theta)}$ via Eq.~(\ref{eq:expt_relation})), it is highly unlikely that the linear dependence on $\kappa$ (which translates to a dependence in $1/\Delta$) and the polynomial dependence on $1/\epsilon$ can be substantially improved, due to complexity theoretic reasons \cite{harrow2009quantum,harrow2014review}. This indicates that for this step, the regime where implementations based on QLSP solvers can outperform classical implementations will likely be for large $N$ (and in the absence of classical approximative schemes such as tensor network approximations), when the QLSP solvers' exponentially improved scaling in $N$ outscale their polynomially worse scalings in $\kappa$ and $\epsilon$ compared to classical solvers. We expand on this discussion, along with the implications of rapidly vanishing $\Delta$ in Section.~\ref{sec:advantage}.

\subsection{Analysis of the fault-tolerant implementation} \label{sec:fault_tol_anal}

The following result bounds the resources required to output an approximation to $\ket{x}$, such that $\bra{b(\theta)} \tilde{H}^{-1} \ket{b(\theta)}$ can be computed up to an error $\epsilon_{\text{HHL}}$ via Eq.~(\ref{eq:expt_relation}) (Step 3). 

The proof follows from a direct extension of \cite{harrow2009quantum} to account for the presence of the shift $\sigma$ which introduces a dependence on $\Delta$. Similar assumptions on the usage of filter functions for preconditioning and the implementations of Hamiltonian simulation and QPE are adopted. Details on obtaining the complexities of intermediate steps also remain similar, and can be improved by considering recent developments in QLSP solvers \cite{berry2015hamiltonian,childs2017quantum,low2017optimal,gilyen2019quantum,lin2020optimal,an2022quantum,costa2022optimal} and Hamiltonian simulation \cite{berry2015hamiltonian,low2017optimal}. We omit a lengthy discussion, focusing instead on the effect of introducing a shift $\sigma$ to the procedure.

\begin{theorem*}
\label{theorem:error}
Consider the quantum circuit defined in Section.~(\ref{sec:fault_tol}) to execute Step 3.(a) via the HHL algorithm. To output a quantum state $\ket{\tilde{x}}$ such that the error in its expectation value for the observable $\tilde{H}$ due to the imprecision of the HHL algorithm is within a fixed error $\epsilon_{\text{HHL}}$, i.e.: 
\begin{equation}
    | \bra{\tilde{x}}\tilde{H}\ket{\tilde{x}} - \bra{b(\theta)}\tilde{H}^{-1}\ket{b(\theta)} | \leq \epsilon_{\text{HHL}},
\end{equation}
the depth of the circuit scales as:
\begin{equation} \label{eq:ft_depth}
    O(\text{log}(N) s^2 \kappa/\epsilon_{HHL} + \text{polylog}(N)),
\end{equation}
with a success probability $p_1$ that scales as $\Omega(\Delta^2)$ in the worst case and up to $O(1)$ in the best case, and requires $\text{log}(N) + 1 + O(\text{log}(1/\epsilon_{\text{HHL}}))$ qubits.
\end{theorem*}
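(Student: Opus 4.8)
The plan is to follow the error and resource analysis of the original HHL algorithm~\cite{harrow2009quantum} essentially line by line, and to track precisely the two places where the shift $\sigma$ intervenes: the eigenvalue-inversion subroutine and the post-selection probability. First I would recall HHL's decomposition of the output-state error into three independent contributions: (i) the Hamiltonian-simulation error in implementing $e^{-iHt}$, (ii) the phase-estimation error in resolving the eigenphases of $H$, and (iii) the error from the controlled $R_\text{y}$ rotation and subsequent post-selection. For a $k$-local, $s$-sparse $H$ of the form Eq.~(\ref{eq:k-local}), step (i) costs $\widetilde{O}(s^2 t \log N)$ elementary gates for evolution time $t$ (improvable with modern simulation methods, which I would simply cite), and step (ii) requires a clock register of $O(\log(1/\delta))$ qubits and evolution time $t = O(1/\delta)$ to reach eigenphase precision $\delta$.

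Second, I would analyze the modified inversion $\lambda \mapsto 1/(\lambda - \sigma)$. Its relevant Lipschitz constant over the spectrum of $H$ is controlled by $1/C^2$ with $C = \min_i\{|\lambda_i - \sigma|\} = \min\{r\Delta,(1-r)\Delta\}$ from Eq.~(\ref{eq:gap}), which is $\Theta(\Delta)$ for fixed $r$. Hence, to reproduce the inverted amplitudes $\beta_i/(\tilde\lambda_i - \sigma)$ — and through Eq.~(\ref{eq:expt_relation}) the quantity $\bra{b(\theta)}\tilde H^{-1}\ket{b(\theta)}$ — to within $\epsilon_{\text{HHL}}$, it suffices to take phase-estimation precision $\delta = \Theta(C\,\epsilon_{\text{HHL}})$, up to benign factors accounting for the normalization $\lVert\tilde H^{-1}\ket{b(\theta)}\rVert$ and for $\lVert\tilde H\rVert = O(1)$ in HHL's normalization. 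This is exactly where the shift converts the smallest singular value $1/\kappa$ of the matrix inverted in plain HHL into $C$, so the effective condition number, and therefore the required $t$, acquires the stated $1/\Delta$ dependence once $\epsilon_{\text{HHL}} = O(\Delta)$ is imposed. Substituting $t$ into the simulation cost reproduces the depth bound Eq.~(\ref{eq:ft_depth}), in which the residual $\kappa$ reflects HHL's preconditioning/filter-function treatment of the spectral range of $H$, and the additive $\text{polylog}(N)$ absorbs the QPE overhead and the reversible arithmetic circuit computing $\arccos(C/(\tilde\lambda-\sigma))$ (polynomial in the bit-precision, which we suppress). The qubit count then follows immediately: $\log N$ system qubits, one post-selection ancilla, and $O(\log(1/\epsilon_{\text{HHL}}))$ clock qubits.

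Third, the success-probability claim follows directly from Eq.~(\ref{eq:proba}), $p_1 \approx C^2\lVert\tilde H^{-1}\ket{b(\theta)}\rVert^2$. For the worst case I would use $\lVert\tilde H^{-1}\ket{b(\theta)}\rVert \ge 1/\lVert\tilde H\rVert = \Omega(1)$, giving $p_1 = \Omega(C^2) = \Omega(\Delta^2)$. For the best case, when $\ket{b(\theta)} = \ket{\lambda_k}$ one has $\tilde H^{-1}\ket{\lambda_k} = \ket{\lambda_k}/(\lambda_k - \sigma)$ with $|\lambda_k - \sigma| = r\Delta \ge C$, so $p_1 \approx C^2/(r\Delta)^2 = O(1)$, equal to $1$ when $r \le 1/2$; this matches the claim that $p_1$ tends to unity as $\beta_k \to 1$ during optimization. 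Amplitude amplification would improve the worst case to $\Omega(\Delta)$, as noted in the main text.

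I expect the main obstacle to be the careful bookkeeping in the second step: bounding the propagated error in $\bra{\tilde x}\tilde H\ket{\tilde x}$ rather than in the state $\ket{\tilde x}$ itself, since Eq.~(\ref{eq:expt_relation}) carries the data-dependent normalization $\lVert\tilde H^{-1}\ket{b(\theta)}\rVert$, which can be as large as $1/C$; one must verify that this does not degrade the $1/\epsilon_{\text{HHL}}$ scaling beyond what is claimed, and that the $\arccos(C/(\tilde\lambda-\sigma))$ singularity at $\tilde\lambda = \sigma$ is never probed because the gap guarantees $|\tilde\lambda_i - \sigma| \ge C$ for all $i$. The remaining ingredients — HHL's three-term error split, the sparse-Hamiltonian simulation cost, and the arithmetic-circuit overhead — are standard and can be imported with citations.
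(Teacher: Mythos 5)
Your proposal is correct and takes essentially the same approach as the paper: both adapt HHL's error and resource analysis to the shifted inversion, obtain the depth from the sparse-simulation cost $O(\log(N)\,s^2 t_0)$ with $t_0 = O(\kappa/\epsilon_{\text{HHL}})$, derive the worst/best-case $p_1$ from Eq.~(\ref{eq:proba}) with $C=\Theta(\Delta)$ and the bounds $1 \leq \lVert\tilde{H}^{-1}\ket{b(\theta)}\rVert \leq 1/\Delta$, and count qubits identically. The only cosmetic difference is that you re-derive the $t_0$ requirement from a Lipschitz bound on $\lambda\mapsto 1/(\lambda-\sigma)$, whereas the paper imports HHL's fidelity bound $\mathrm{Re}\braket{\tilde{x}}{x}\geq 1-O(\kappa^2/t_0^2)$ and converts it via $|\expt{\psi}{M}{\psi}-\expt{\phi}{M}{\phi}|\leq 2\lVert M\rVert_\infty\sqrt{1-|\braket{\psi}{\phi}|^2}$; the normalization issue you flag at the end is deferred by the paper to the error analysis of Section~\ref{sec:other_components} rather than resolved inside this proof.
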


\begin{proof} 
Due to the HHL algorithm, the main source of error is contributed by the QPE step. It involves querying $H$ via Hamiltonian simulation $e^{-iHt}$ for time $t \leq t_0$, which can be implemented in depth $T = O(\text{log}(N) s^2 t_0)$ for an $s$-sparse $H$ \cite{berry2007efficient}, where the dependence in precision is omitted as it scales much more weakly than other parameters.

Firstly, adapting from \cite{harrow2009quantum}, taking $t_0 = O(\lVert M \rVert_{\infty} \kappa / \epsilon_{\text{HHL}})$ leads to an error of $\epsilon_{\text{HHL}}$ in the expectation value of an observable $M$ due to the imprecision of the HHL algorithm, i.e.:
\begin{equation} \label{eq:m_error_bound}
    |\expt{\tilde{x}}{M}{\tilde{x}} - \expt{x}{M}{x} | \leq  O( \lVert M \rVert_{\infty} \frac{\kappa}{t_0}  ).
\end{equation}
To show this, note that since the absolute difference between the expectation value of an observable $M$ for quantum states $\ket{\psi}$ and $\ket{\phi}$ is bounded above by:
\begin{equation} \label{eq:m_fidelity_bound}
    |\expt{\psi}{M}{\psi} - \expt{\phi}{M}{\phi} | \leq 2 \lVert M \rVert_{\infty} \sqrt{1-|\braket{\psi}{\phi}|^2},
\end{equation}
it suffices to bound the overlap between $\ket{\tilde{x}}$ and $\ket{x}$. From \cite{harrow2009quantum}, we have:
\begin{equation}
    Re \braket{\tilde{x}}{x} \geq 1-O(\kappa^2/t_0^2),
\end{equation}
which leads to the final bound Eq.~(\ref{eq:m_error_bound}) via Eq.~(\ref{eq:m_fidelity_bound}).

The observable $M$ to be measured from the output quantum state $\ket{\tilde{x}}$ is always taken to be equal to $\tilde{H}$, resulting in Eq.~(\ref{eq:expt_relation}) in the absence of any errors. Since $\sigma$ is always chosen between 0 and the largest eigenvalue of $H$, the shifted operator have eigenvalues in $(-1, 1)$, so that $\lVert M \rVert_{\infty} = 1$. Eq.~(\ref{eq:m_error_bound}) then implies that an error of $\leq \epsilon_{HHL}$ will require a simulation time of up to $t_0 = \kappa/\epsilon_{HHL}$. The depth of QPE therefore scales as $O(\text{log}(N) s^2 \kappa/\epsilon_{HHL})$. Since the preparation of $\ket{b(\theta)}$ is assumed to scale as $O(\text{polylog} N)$ with the choice of an efficient variational ansatz $U(\theta)$, the final circuit depth is $O(\text{log}(N) s^2 \kappa/\epsilon_{HHL} + \text{polylog}(N))$. 

The success probability of post-selecting the shift-inverted quantum state, $p_1$, depends on both $\Delta$ and $\| \tilde{H}^{-1} \ket{b} \|$ through Eqs.~({\ref{eq:proba}}) and (\ref{eq:gap}). In the worst case where $\| \tilde{H}^{-1} \ket{b} \| \geq 1$ is saturated, $p_1 \sim \Delta^2$. In the best case, $\| \tilde{H}^{-1} \ket{b} \| \leq 1/\Delta$ is saturated and $p_1 \sim 1$, which occurs when the input quantum state $\ket{b}$ concentrates at the target eigenstate.

For the number of qubits, the ancillary register always requires 1 qubit, and the solution register contributes $n = \text{log}(N)$ qubits. Finally, the size of the eigenvalue register $m$ determines the binary precision to which the eigenvalues of the Hamiltonian are represented, and uses $m = O(\text{log}(1/\epsilon_{\text{HHL}}))$ qubits.

\end{proof}

\subsection{Near-term implementation} \label{sec:near_term}
Currently available near-term quantum computers are characterized by limited qubit coherence times and relatively high error rates, which limits the circuit depth of executable quantum circuits. As an alternative to QLSP solvers such as the HHL algorithm that require fault-tolerant error-correction, recently developed near-term QLSP solvers \cite{subacsi2019quantum,bravo2019variational,xu2021variational,huang2021near} possess milder resource requirements that are consistent with these limitations. For concreteness, we describe an implementation with the Variational Quantum Linear Solver (VQLS) \cite{bravo2019variational}. Together with a suitable expectation estimation algorithm such as operator averaging (described in Section.~\ref{sec:other_components}), this enables the computation of inverse expectation values (Step. 3) on near-term quantum computers.

VQLS effectively casts the QLSP as a ground state optimization problem and prepares the solution $\ket{x}$ in a variational manner. Given an input matrix $A$ decomposable as a linear combination of $L$ unitaries $A_i$ with complex coefficients $c_i$:
\begin{equation} \label{eq:vqls_lincomb}
    A = \sum_{i=1}^L c_i A_i
\end{equation}
and a quantum state $\ket{b} = U \ket{0...0}$ prepared by a unitary circuit $U$, define the Hamiltonian $G$, which can be chosen to be either:
\begin{equation} \label{eq:g_global}
    G_{\text{global}} = A^\dag (\mathbb{1} - U\ket{0...0} \bra{0...0}U^{\dag}) A
\end{equation}
or
\begin{equation} \label{eq:g_local}
    G_{\text{local}} = A^\dag U \left( \mathbb{1} - \frac{1}{n} \sum_{i=1}^n \ket{0_i}\bra{0_i} \otimes \mathbb{1}_{\overline{i}}  \right) U^\dag A.
\end{equation}
It can be shown that the unique ground state of $G$ with energy 0 corresponds to the quantum state $\ket{x}$ proportional to $A^{-1} \ket{b}$ \cite{subacsi2019quantum}, where $\mathbb{1}_{\overline{i}}$ denotes the identity operator acting on all qubits besides the $i$-th. Denoting $\ket{x(\phi)} = V(\phi) \ket{0...0}$ the quantum state prepared by the variational ansatz $V(\phi)$ and $\phi$ its parameters, VQLS then consists of solving the optimization problem:
\begin{equation} \label{eq:opt_prob_vqls}
    \phi^* = \argmin_{\phi} \bra{x(\phi)} G \ket{x(\phi)}
\end{equation}
in a variational manner to yield $\ket{x} \approx V(\phi^*)\ket{0...0}$, with the cost function $C_{\text{VQLS}} (\phi) = \bra{x(\phi)} G \ket{x(\phi)} \geq 0$.

$C_{\text{VQLS}}(\phi)$ can be computed by evaluating each of its $O(L^2)$ terms with variants of the Hadamard test, in a manner feasible for near-term quantum computers. A termination condition for the value of $C_{\text{VQLS}} (\phi)$ to achieve a fixed error tolerance can also be defined explicitly in terms of the parameters of the input matrix. Finally, existing numerical scaling results for the VQLS reveal runtimes that are comparable to QLSP solvers such as the HHL algorithm, including polynomial scalings in $n$, $\kappa$, $s$ and $\epsilon$. 

Within our algorithm, the input matrix corresponds to the shifted Hamiltonian $\tilde{H}$, which is a $k$-local Hamiltonian of the form Eq.~(\ref{eq:k-local}). It therefore takes the form of Eq.~(\ref{eq:vqls_lincomb}) naturally, since the Pauli terms $h_i$ are unitary. We can further reduce the number of terms in the cost function by a factor of $\sim n$ by grouping anti-commuting Pauli terms together \cite{izmaylov2019unitary,zhao2020measurement,ralli2021implementation}, which amounts to solving a minimum clique cover problem for a graph representing the anti-commutation of the Pauli terms. On the other hand, $\ket{b}$ corresponds to the parametrized quantum state $\ket{b(\theta)} = U(\theta) \ket{0...0}$ at each step of the optimization over $\theta$. Finally, following the argument from Section.~\ref{sec:other_components}, we can further define a termination condition for $C_{\text{VQLS}} (\phi)$ in terms of the parameters $\kappa$ and $\Delta$ of $H$, by imposing that the computation of $\bra{b(\theta)} \tilde{H}^{-1} \ket{b(\theta)}$ via the output of VQLS through Eq.~(\ref{eq:expt_relation}) is sufficiently accurate to distinguish between the shift-inverted energies of the target's neighboring eigenstates (Eq.~(\ref{eq:epsilon_condition})).

With this implementation, the overall algorithm now consists of two optimization loops, with an outer loop over $\theta$ that searches for the target excited eigenstate (by solving Eq.~(\ref{eq:opt_prob})) and an inner loop over $\phi$ that allows the computation of inverse expectation values (Step.3, by solving Eq.~(\ref{eq:opt_prob_vqls})). We refer the reader to Section.~(\ref{sec:near_term_analysis}) for a detailed elaboration on the runtime of the procedure, including discussions on trainability and barren plateaus.

Notably, this approach bears similarities to an existing classical shift-invert-DMRG-based method used to obtain state-of-the-art results for MBL Hamiltonians \cite{yu2017finding}. There, in order to avoid costly intermediate steps involving the solution of linear systems, a variational optimization is performed over Matrix Product States (MPS) $\ket{x_{\text{MPS}}}$ to minimize the distance $\lVert \tilde{H}\ket{b} - \ket{x_{\text{MPS}}} \rVert^2$. In this context, the VQLS can be viewed as the quantum analogue to this classical optimization procedure, with the MPS $\ket{x_{\text{MPS}}}$ replaced by variational quantum states $\ket{x(\phi)}$, and the evaluation of the distance replaced by cost functions involving the Hamiltonian $G$. For the case of MBL Hamiltonians, the effectiveness of the classical tensor network approach of \cite{yu2017finding} indicates that the QLSP solution Eq.~(\ref{eq:qlsp}) can be efficiently represented as MPS. This further implies that our implementation based on low-depth, low-entanglement choices of the ansatz $V(\phi)$ can also efficiently represent $\ket{x}$ on quantum computers in this case. We also elaborate on this observation in Section.~(\ref{sec:applications_spin}).

\subsection{Analysis of the near-term implementation} \label{sec:near_term_analysis}
We elaborate on the runtime of the near-term implementation with VQLS, which depends on the whether both the computation and optimization of the cost function $C_{\text{VQLS}}(\phi)$ can be performed efficiently. Full descriptions of the circuits used can be found from \cite{bravo2019variational}. We also note that the number of qubits required is at most $2n+1$ for all circuits considered.

Firstly, we define a termination condition for the inner loop optimization over $\phi$ in terms of the parameters of the problem. Denote $\phi^\star$ the output parameters of VQLS (which may not correspond to the global minima $\phi^*$ of Eq.~(\ref{eq:opt_prob_vqls}) due to suboptimal optimization), and $\epsilon_{\text{M}}(\phi) = |\expt{x(\phi)}{M}{x(\phi)} - \expt{x}{M}{x} |$ the error in the expectation value of the observable $M$ for VQLS' output quantum state $\ket{x(\phi)}$. Following a similar argument as Section.~(\ref{sec:other_components}), we require an output quantum state such that its expectation value for the observable $M = \tilde{H}$ is accurate enough to distinguish between the energy of the target eigenstate and those of its immediate neighbors via Eq.~(\ref{eq:epsilon_condition}), resulting in the condition $\epsilon_{\text{M}}(\phi^\star) < \frac{r}{1-r}\Delta$. Since the following bounds hold \cite{bravo2019variational}:
\begin{alignat}{2}
    C_{\text{VQLS}}(\phi) &\geq \frac{\epsilon_M^2 (\phi)}{4 \lVert M \rVert_{\infty}^ 2 \kappa^2} &&\quad \text{if} \quad  G = G_{\text{global}}, \\
    C_{\text{VQLS}}(\phi) &\geq \frac{\epsilon_M^2 (\phi)}{4\lVert M \rVert_{\infty}^2 \kappa^2} \frac{1}{n} &&\quad \text{if} \quad  G = G_{\text{local}},
\end{alignat}
we can ensure the satisfaction of the aforementioned condition by terminating the optimization (over $\phi$) only when $C_{\text{VQLS}}(\phi) \leq C_{\text{min}}$ is satisfied, with the choices:
\begin{alignat}{2}
    C_{\text{min}} &= \frac{r^2 \Delta^2}{4 (1-r)^2 (\kappa-1)^2} &&\quad \text{if} \quad  G = G_{\text{global}}, \\
    C_{\text{min}} &= \frac{r^2 \Delta^2}{4 (1-r)^2 (\kappa-1)^2} \frac{1}{n} &&\quad \text{if} \quad  G = G_{\text{local}},
\end{alignat}
where we have used $\lVert M \rVert_{\infty} = 1-1/\kappa$.

Next, we consider the runtime to compute $C_{\text{VQLS}} (\phi)$, which depends on the number of terms it contains and the depth of the circuits used to evaluate them individually. For an input matrix that is a sum of $L$ unitaries, the number of terms in the cost function scales as $O(L^2)$ for the choice $G = G_{\text{global}}$ (Eq.~(\ref{eq:g_global})) and $O(nL^2)$ for $G = G_{\text{local}}$ (Eq.~(\ref{eq:g_local})). In our context where the input matrix corresponds to physical and chemical Hamiltonians $\tilde{H}$ of the form Eq.~(\ref{eq:k-local}), $L$ generically scales polynomially in $n$, e.g. as $O(n^4)$ for molecular Hamiltonians and as $O(nD)$ for spin models on a $D$-dimensional lattice. This scaling can be directly reduced by a factor of $\sim n$ by employing term grouping strategies \cite{izmaylov2019unitary,zhao2020measurement,ralli2021implementation} that group anti-commuting Pauli terms together, which amounts to solving a minimum clique cover problem for a graph representing the anti-commutation of the Pauli terms. While this task is NP-hard in general, polynomial-time heuristics such as the Recursive Largest First heuristic \cite{leighton1979graph} has been shown to be effective \cite{izmaylov2019unitary}. On the other hand, each term in $C_{\text{VQLS}}(\phi)$ can be computed in a parallel manner with the Hadamard test or its variants. Denoting $D_U$ the depth of the circuit $U(\theta)$ (which prepares VQLS' input state $\ket{b(\theta)}$) and $D_V$ the depth of the ansatz $V(\phi)$ (which parametrizes the solution $\ket{x(\phi)}$), the depths of the Hadamard test circuits are dominated by $D_U$ and $D_V$, and scales as $O(D_U) + O(D_V)$. Since we take $D_U$ to scale polynomially with $n$ with physically motivated choices of $U(\theta)$ (see also Section.~\ref{sec:advantage}), the depth of the cost computation circuits rests on the choice of the ansatz $V(\phi)$.

This brings us finally to the issue of trainability, which, together with the cost of computing $C_{\text{VQLS}}(\phi)$, determines the total runtime of VQLS. We remark that precisely bounding the runtime of the optimization process for heuristics such as the VQLS is generally intractable analytically, and depends on the choices of the ansatz, cost function, classical optimization algorithm, and parameter initialization scheme. This is further complicated by the possible appearance of barren plateaus that limit the trainability of variational algorithms due to the form the ansatz \cite{mcclean2018barren,huang2021near}, the form of the cost function \cite{cerezo2021cost}, and the presence of noise \cite{wang2021noise}. We only attempt to provide indications from numerical scaling results on the total runtime of VQLS obtained by \cite{bravo2019variational} and \cite{xu2021variational}. In particular, \cite{bravo2019variational} were able to extract polynomial scalings in parameters $n$, $\kappa$, $s$ and $\epsilon$ comparable to the HHL algorithm. Among other examples, these scalings were obtained for Ising Hamiltonians with nearest-neighbor interactions of increasing condition number, with $V(\phi)$ chosen to be the hardware-efficient ansatz (consisting of nearest-neighbor interactions at each layer), and $G_{\text{local}}$ chosen for $C_{\text{VQLS}}(\phi)$ (which exhibits improved trainability due to its local structure \cite{bravo2019variational,cerezo2021cost}). Combined with successful implementations of small-scale instances on real quantum computers, these results provide positive indications on the runtime of the near-term implementation in our context.

Notably, the above discussions indicate that practical implementations for spin systems defined on low-dimensional lattices will be attractive and feasible in the near term, due to the mild $O(nD)$ scaling in the number of Hamiltonian terms, which can further be improved via term grouping strategies. Section.~\ref{sec:applications_spin} describes numerical simulation results for such a problem and provides further elaborations.

Finally, we point out that physically motivated choices of $V(\phi)$ will likely be crucial to alleviate the barren plateau problem, an example being the Alternating Operator Ansatz \cite{subacsi2019quantum,huang2021near} and the ADAPT-VQE ansatz that have been shown to circumvent the phenomenon in certain situations \cite{grimsley2019adaptive,tang2021qubit}.

\subsection{Other components of the procedure} \label{sec:other_components}
This section discusses other components of the proposed algorithm, namely (in the following order) an overall error analysis for the cost computation (Step 3), the expectation estimation subroutine (Step 3.(b)), the computation of the constant of normalization $\lVert \tilde{H}^{-1}\ket{b(\theta)} \rVert$, issues regarding the spectral gap $\Delta$, and trade-offs in the choice of the hyperparameter $r$ (Step 1).

We begin by discussing the overall sources of errors and required precision of the algoritm's subroutines. Denote by $\epsilon$ the error accrued in the computation of the cost function Eq.~(\ref{eq:cost_func}) at Step 3 via the relation Eq.~(\ref{eq:expt_relation}):
\begin{equation} \label{eq:error_main}
    \epsilon \equiv  | \exptshort{b(\theta)}{\tilde{H}^{-1}}{b(\theta)} - \lVert \tilde{H}^{-1}\ket{b(\theta)} \rVert^2_{\text{est}} \bra{\tilde{x}} \tilde{H} \ket{\tilde{x}}_{\text{est}} |
\end{equation}
where the subscript $\textit{est}$ indicates an estimate of the quantity, and $\ket{\tilde{x}}$ is the imprecise output of the QLSP solver. Assuming that the norm $\lVert \tilde{H}^{-1}\ket{b(\theta)} \rVert$ is computed without errors for now, the main sources of errors contributing to $\epsilon$ are $\epsilon_{\text{HHL}}$, the error from the imprecise output of the HHL algorithm:
\begin{equation}
    \epsilon_{\text{HHL}} \equiv |\exptshort{\tilde{x}}{\tilde{H}}{\tilde{x}} - \exptshort{x}{\tilde{H}}{x} |,
\end{equation}
and $\epsilon_{\text{exp}}$, the error from the imprecise estimation of the expectation value of $\tilde{H}$ on the state $\ket{\tilde{x}}$:
\begin{equation}
    \epsilon_{\text{exp}} \equiv |\exptshort{\tilde{x}}{\tilde{H}}{\tilde{x}}_{\text{est}} - \exptshort{\tilde{x}}{\tilde{H}}{\tilde{x}}|.
\end{equation}
Developing Eq.~(\ref{eq:error_main}) yields:
\begin{equation}
\begin{split}
    \epsilon &= \lVert \tilde{H}^{-1}\ket{b(\theta)} \rVert^2 | \exptshort{x}{\tilde{H}}{x} - \bra{\tilde{x}} \tilde{H} \ket{\tilde{x}}_{\text{est}} | \\
    & \leq \lVert \tilde{H}^{-1}\ket{b(\theta)} \rVert^2 \left( \epsilon_{\text{exp}} + \epsilon_{\text{HHL}} \right),
\end{split}
\end{equation}
where the final inequality follows from the triangle inequality. As mentioned in Section.~\ref{sec:algorithm}, we impose the condition that $\epsilon$ must be sufficiently small for $C(\theta)$ to be able to distinguish between the target eigenstate $\ket{\lambda_k}$ and those of its immediate neighbors $\ket{\lambda_{k \pm 1}}$, in order for the minimization of the estimated value of $C(\theta)$ to yield $\ket{\lambda_k}$ uniquely. This results in the condition:
\begin{equation} \label{eq:epsilon_condition}
\begin{split}
    \epsilon 
    & \leq \left| \frac{1}{\lambda_k - \sigma} - \frac{1}{\lambda_{k \pm 1} - \sigma} \right| \\
    & = \frac{1}{\Delta r (1-r)}.
\end{split}
\end{equation}
Noting that the squared norm is bounded above as:
\begin{equation}
\begin{split}
    \lVert \tilde{H}^{-1}\ket{b(\theta)} \rVert^2 &= 
    \sum_i \left( \frac{\beta_i}{\lambda_i - \sigma} \right)^2 \\
    & \leq 1/(r \Delta)^2,
    \end{split}
\end{equation}
this translates to the conditions that both $\epsilon_{\text{exp}}$ and $\epsilon_{\text{HHL}}$ must be of the order $O(\Delta)$ for Eq.~(\ref{eq:epsilon_condition}) to hold.

Once the solution $\ket{\tilde{x}}$ of the QLSP is prepared via either the fault-tolerant (Section.~\ref{sec:fault_tol}) or near-term (Section.~\ref{sec:near_term}) implementations, it remains to measure the observable $\tilde{H}$ to complete Step 3, through Eq.~(\ref{eq:expt_relation}). The runtime of this step scales polynomially with the inverse error $1/\epsilon_{\text{exp}}$, and can be achieved with several standard approaches. Firstly, operator averaging \cite{peruzzo2014variational} contributes at most $O(L / \epsilon_{\text{exp}}^2 )$ circuit repetitions, where $L$ (the number of terms in the Hamiltonian Eq.~(\ref{eq:k-local})) generically scales as $O(\text{polylog}(N))$ for applications in chemistry and many-body physics, and can be greatly reduced by parallelization and term grouping strategies \cite{arrasmith2020operator,huggins2021efficient,tilly2022variational}. This approach does not require additional ancillary qubits and circuit depth, and is therefore compatible with both implementations. On the other hand, QPE-based methods are able to approach the optimal Heisenberg scaling of $O(1/\epsilon_{\text{exp}})$, at the expense of requiring additional ancillary qubits and $O(1/\epsilon_{\text{exp}})$ calls to Hamiltonian simulation \cite{knill2007optimal}. This is thus the preferred method when fault-tolerant, error-corrected quantum computers are available. Also noteworthy are recently developed schemes with resource scalings that are intermediate between the two strategies \cite{wang2019accelerated}.

Next, for the constant of normalization $\lVert \tilde{H}^{-1}\ket{b(\theta)} \rVert$, applying the triangle inequality to Eq.~(\ref{eq:error_main}) together with $\lVert \tilde{H}^{-1} \rVert_{\infty} \leq 1$, and imposing Eq.~(\ref{eq:epsilon_condition}) in a similar manner as $\epsilon_{\text{HHL}}$ and $\epsilon_{\text{exp}}$ leads to the condition:
\begin{alignat}{2}
   \epsilon_{\text{norm}} &\equiv | \lVert \tilde{H}^{-1}\ket{b(\theta)} \rVert^2 | - \lVert \tilde{H}^{-1}\ket{b(\theta)} \rVert^2 |_{\text{est}} | \\
    &= O(\Delta).
\end{alignat}
The norm can be computed in several ways. Most straightforwardly, developing $\lVert \tilde{H}^{-1}\ket{b(\theta)} \rVert$ yields:
\begin{equation} \label{eq:norm_expt}
	\lVert \tilde{H}^{-1}\ket{b(\theta)} \rVert^2 = \frac{1}{\exptshort{x}{\tilde{H}^2}{x}},
\end{equation}
so that it can be evaluated with expectation estimation algorithms of the preceding paragraph. In the case of the HHL implementation described in Section.~(\ref{sec:fault_tol}), it can alternatively be computed via its relation with the success probability Eq.~(\ref{eq:proba}).

The spectral gap $\Delta$ in the bulk of generic many-body systems vanishes exponentially with system size $n$, which results in a potentially exponential scaling in problem size due to the $\text{poly}(1/\Delta)$ runtime dependence. Indeed, the same issue arises in techniques (both classical and quantum) based on spectral transformations \cite{pietracaprina2018shift,sierant2020polynomially}. We discuss several methods to alleviate this dependence. Firstly, it is known that information on the structure of eigenstates and the presence of symmetries can be taken into account to resolve eigenstates that are exponentially close in energy. This idea is exploited in classical simulations of large MBL Hamiltonians (discussed in Section.~\ref{sec:applications_spin}). By leveraging on the fact that neighboring eigenstates (in energy) differ extensively in their spatial structure due to a set of conserved local integrals of motion, they can be prevented from mixing despite exponentially vanishing many-body level spacings \cite{khemani2016obtaining,devakul2017obtaining}. More generally, we refer the reader to \cite{baiardi2019optimization}, where similar ideas have been explored in the context of chemistry (under \textit{root homing methods}), when e.g. knowledge on the structure of electronic and vibrational eigenstates is available. Such information will likely be encoded in the ansatz $U(\theta)$ in our context. Secondly, the $O(\Delta)$ condition on $\epsilon_{\text{HHL}}$ can be relaxed (to constant) if instead of targeting an eigenstate, we accept a superposition of eigenstates from a fixed window of energy. This situation arises when only coarse-grained, global information on the spectrum of a Hamiltonian is required, an example being the study of mobility edges in MBL systems.

Finally, the choice of $r \in (0,1]$ controls the spectral separation between the shift $\sigma$ and the target energy $|\sigma - \lambda_k| = r \Delta$, and is a hyperparameter of the procedure that leads to a trade-off between the cost of solving the QLSP and the difficulty of parameter optimization. Due to the form of the reciprocal function, choosing small values of $r$ close to 0 increases repulsion between $\lambda_k$ and its neighbors, and so penalizes non-target eigenstates more heavily, leading to steeper optimization landscapes. At the same time, the spectral gap of the shifted Hamiltonian at the target energy $|\sigma - \lambda_k| = r \Delta$ is reduced, resulting in increased complexity when solving for Eq.~(\ref{eq:qlsp}). On the other hand, choosing $r$ close to 1 leads to the opposite conclusion of reduced computational complexity in solving for Eq.~(\ref{eq:qlsp}), but more difficult optimization due to milder optimization landscapes. In practice, $r$ can be varied at different stages of the optimization process in a heuristic manner, with smaller values preferred when the optimizer encounters difficulty navigating a flat landscape.

\subsection{Advantages over classical techniques} \label{sec:advantage}
This section discusses relations and possible advantages of our proposed approach over existing classical techniques for the task of preparing highly excited eigenstates deep in the spectrum of a Hamiltonian. We evaluate conditions and situations under which our approach is favourable, where it is expected to outperform classical algorithms based on shift-inversion. 

Before beginning, we remark that the variational nature of our algorithm renders guarantees on the runtime arising from the optimization procedure -- the time it takes to obtain the global optima of the optimization problem of Eq.~(\ref{eq:opt_prob}) -- difficult to obtain, and is not precisely answered in the present work. Such an analysis is generally challenging to carry out in generality, and depends on the choice of the variational ansatz, structure of Hamiltonian, classical optimization algorithm, and form of cost function, and is further complicated by the possible appearance of barren plateaus \cite{mcclean2018barren,wang2021noise}.  Nonetheless, this is not inconsistent with the core findings of our work -- that the shift-inversion procedure can be performed efficiently on quantum computers via QLSPs, and possibly more efficiently than classical methods, under circumstances which we elaborate below. The optimization procedure is also common to classical methods based on shift-inversion, and their success provide positive indications \cite{yu2017finding,baiardi2019optimization}. We find further agreement from limited numerical results at small scales discussed in Appendix.~\ref{appendix:scaling}.

The core idea of adapting classical ground-state eigensolvers to excited states by shift-inversion is well-known in different areas. Depending on the context, it can be exploited in different ways such as with iterative projection methods based on Lanczos/Arnoldi \cite{luitz2015many,pietracaprina2018shift} and Davidson \cite{sleijpen1996jacobi}, and in the presence of approximative ansatzes such as Matrix Product States (MPS) \cite{dorando2007targeted,serbyn2016power,yu2017finding,baiardi2019optimization}. These methods directly target an eigenstate as long as a good estimate of its energy is known, and in some cases are the only viable methods to solve our task for large problem sizes. Indeed, our proposed algorithm can be viewed as an implementation of this idea on quantum computers, with classical linear system solvers and wavefunction representation replaced by its counterparts on quantum computers.


Immediately, the $\text{poly}(1/\Delta)$ runtime (arising from the QLSP solvers' $\text{poly}(\kappa)$ dependence; see e.g. Section.~\ref{sec:fault_tol}, and the proof of Eq.~(\ref{eq:ft_depth})) potentially places a significant overhead for problems with exponentially vanishing spectral gaps $\Delta$ in the bulk. Due to the optimality of the $O(\kappa)$ scaling \cite{harrow2009quantum,harrow2014review}, this rules out implementations of our algorithm on quantum computers with $\text{poly}(n)$ runtime, unless $1/\Delta$ scales only as $\text{polylog}(n)$, which is uncommon as many-body level spacings generically vanish exponentially fast with $n$. Compared to known highly efficient classical algorithms such as conjugate gradient with a $O(\sqrt{\kappa})$ scaling, the best quantum algorithms that achieves the optimal $O(\kappa)$ scaling are at a disadvantage by a polynomial factor. Nonetheless, since even polynomial quantum speed-ups are expected to be practically meaningful \cite{babbush2021focus,lee2023evaluating}, we expect the improved dependence in $n$ of our algorithm (in particular, from Hamiltonian simulation and state preparation, to be discussed later below) to eventually yield an advantage over classical implementations at large $n$.

Beyond that, the main bottlenecks of classical methods based on shift-inversion are twofold. Firstly, the representation of quantum states in classical memory can be prohibitively inefficient when approximative schemes are unavailable, which affects the cost of their storage and manipulations (such as matrix-vector multiplications) in all subsequent computational steps within the algorithm. Secondly, shift-inversion requires obtaining the solution of the linear system Eq.~(\ref{eq:qlsp}), which generically scales exponentially in $n$, and indeed forms a major bottleneck for classical techniques \cite{yu2017finding,pietracaprina2018shift}. Indications to whether our quantum algorithm can outperform them therefore hinges on whether at least one of the above bottlenecks is quantumly easy but classically hard, without imposing additional prohibitive costs. Arguably, physical and chemical Hamiltonians present the best opportunities to satisfy these conditions. 

Firstly, the representation of target eigenstates must be more efficient on quantum computers. Within our algorithm, this condition manifests as the availability of efficient ansatz choices $U(\theta)$ to describe the target eigenstate. Physically motivated ansatzes such as those based on the UCC (and variants more suitable for excited eigenstates such as $k$-UpCCGSD) \cite{lee2018generalized,greene2021generalized} are widely believed to hold strong potential as they can be implemented more efficiently on quantum computers, through e.g. a Trotter decomposition \cite{cao2019quantum,anand2022quantum}. Furthermore, whereas the ground states of local, gapped Hamiltonians are known to admit efficient classical descriptions (such as via MPS) by virtue of their area-law entanglement, the same description generically breaks down for highly excited eigenstates, which may instead exhibit a volume-law \cite{bianchi2022volume}. This prompts explorations of their efficient representation on quantum computers, which shows positive indications in certain cases \cite{evenbly2014class,van2021preparing,hastings2022optimizing}.

Secondly, obtaining the solution of the linear system Eq.~(\ref{eq:qlsp}) must be more efficient on quantum computers. This subroutine appears during the computation of the inverse expectation value (Step 3) within our algorithm, which was found to scale as $\text{poly}(\text{log}(N), 1/\epsilon, s, \kappa, 1/\Delta)$ (Section.~\ref{sec:fault_tol_anal}). Notably, our procedure only involves the extraction of statistical quantities (in our case, the expectation value and possibly the variance) from the solution of the QLSP, and avoids exponentially costly pre/post-processing steps associated with generic linear algebraic applications. The complexity of inverse expectation estimation is therefore dominated by that of the QLSP solver. 

Here, QLSP solvers such as the HHL algorithm are often quoted to possess a significant advantage due to their exponentially improved $\text{poly}(n)$ dependence, despite (polynomially) worse scalings in $\kappa$, $s$, and $\epsilon$. However, this advantage (originating from QPE) is predicated on the fact that Hamiltonian simulation of $H$ (or more generally, the ability of efficiently access $H$, for post-HHL/LCU QLSPs) on quantum computers take $\text{poly}(n)$ time, while classical methods are significantly more expensive. Again, physical and chemical Hamiltonians show promise, as Hamiltonian simulation for a large class of such problems are efficient (via e.g. a Trotter decomposition \cite{lloyd1996universal}). Nonetheless, the possibility that efficient classical solutions for Eq.~(\ref{eq:qlsp}) also exist must be evaluated carefully, which is generally challenging to verify or rule out. In any case, it will likely take large $n$ for the QLSP solvers' improved scaling in $n$ to outscale worse scalings in other parameters, especially for problems with rapidly vanishing $\Delta$.

\subsection{Relation to existing quantum algorithms} \label{sec:relation_quantum}
This section highlights advantages of our proposed approach over existing quantum algorithms for the task of preparing highly excited eigenstates deep in the spectrum of a Hamiltonian, and how it can be used in conjunction with recently developed quantum algorithms based eigenstate filtering.

Our algorithm possesses key features that circumvent limitations of existing variational quantum algorithms. As a spectral transformation, shift-inversion conventionally outperforms \cite{pietracaprina2018shift} simpler methods such as spectral folding/shift-squaring \cite{wang1994solving,peruzzo2014variational,ye2017sigma,santagati2018witnessing}, which consists of optimizing the cost function $\expt{b(\theta)}{(H-\sigma)^2}{b(\theta)}$. While requiring the same inputs with lower computational costs, it is expected to be limited to small-sized problems \cite{pietracaprina2018shift}, due to the second-order polynomial further reducing energy gaps near the target eigenstate.  In contrast, shift-inversion either repels or transforms adjacent eigenstates to the other end of the spectrum, which renders the cost function highly sensitive to mixtures of neighboring eigenstates to facilitate optimization. These features are exploited in classical methods to produce state-of-the-art results beyond the reach of conventional exact diagonalization \cite{serbyn2016power,yu2017finding,pietracaprina2018shift,baiardi2019optimization}. Numerical results from Section.~\ref{sec:applications} and Appendix.~\ref{appendix:additional_numerics} verify this behaviour, where shift-inversion is observed to consistently outperform spectral folding, especially at large system sizes.

On the other hand, variational methods based on iteratively orthogonalizing against lower energy eigenstates \cite{santagati2018witnessing,higgott_variational_2019,nakanishi2019subspace} are ill-adapted for highly excited eigenstates deep in the spectrum, which requires exponentially many lower-energy eigenstates to be found and included in the cost function computation. Instead, our method prepares target eigenstates directly in a targeted manner, without the need to first solve for any lower-energy eigenstates. Our algorithm is therefore expected to be favourable over these approaches for the above task in the regime of moderate to large $n$.

Next, we discuss how our procedure can be used effectively in conjunction with quantum algorithms with precisely known complexities. Recently developed methods based on eigenstate filtering allows the efficient preparation of a target eigenstate, given information on its energy and spectral gap $\Delta$ \cite{ge2019faster,lin2020optimal}. Crucially, they rely on the availability of an input quantum state with high overlap with the target eigenstate. In particular, given an input quantum state $\ket{x_{\text{input}}} = \gamma \ket{\lambda_k} + \ket{\perp}$ where $\braket{\lambda_k}{\perp} = 0$, the complexity of filtering-based algorithms (in terms of queries to $H$ and preparations of $\ket{x_{\text{input}}}$) depends on the overlap $\gamma$ as $O(1/\gamma)$ \cite{lin2020optimal}. However, the preparation of an input state with large $\gamma$, i.e. one that scales as $O(1/\text{poly}(n))$,  is a non-trivial task in general, and indeed forms the main bottleneck for existing ground state quantum algorithms \cite{lee2023evaluating}. One heuristic approach to initial state preparation relies on the preparation of a known classical ansatz state (such as Hartree-Fock/Kohn-Sham ground-states or tensor network states) on a quantum computer, but requires the assumption that efficient translation is possible \cite{lee2023evaluating}. Other approaches based on adiabatic quantum computing and the quantum Zeno effect have also been explored \cite{lin2020optimal}. 

Our procedure provides an alternative heuristic state preparation method that is low-cost, supplying an input quantum state directly through a low-depth ansatz $U(\theta^*)$. Compared to iterative state preparation methods, its variational nature possesses milder resource requirements. In the case that the algorithm fails to accurately produce $\ket{\lambda_k}$ due to reasons such as the limited expressibility of the ansatz or the failure of the optimizer to converge to the global optima, reasonable optimization ensures that $\ket{b(\theta^*)}$ will nonetheless be improved over the initial state $\ket{0}^{\otimes n}$, by virtue of the variational principle applied to $\tilde{H}^{-1}$. The optimized state can then be taken as the input to other exact methods that require quantum states with non-trivial overlap with $\ket{\lambda_k}$ as inputs.

\section{Applications} \label{sec:applications}
We illustrate applications of the algorithm to problems arising from chemistry and many-body physics, which are natural candidates that can exploit the improved scaling in problem size offered by QLSP solvers, and the efficient representation of physical eigenstates on quantum computers. Additional numerics on the scaling of the procedure are presented in Appendix.~\ref{appendix:additional_numerics}.

\subsection{Quantum Chemistry} \label{sec:applications_chem}
In the context of ab initio methods in chemistry and materials, our algorithm allows a primary output -- a target eigenstate itself -- to be obtained, which can then be probed for useful quantities such as dipole moments, partial charges, absorption spectra and so on. This is traditionally a difficult task in large scale, even for pioneering machine learning methods \cite{westermayr2020machine}, and is important for a variety of applications as discussed in the introduction. For further examples of usage, we refer the reader to applications considered in analagous targeted classical techniques exploiting spectral transformations, such as the references \cite{wang1994solving,sleijpen1996jacobi,dorando2007targeted,baiardi2019optimization}. 

Several strategies to select the shift $\sigma$ are also known in this context, based on extrapolation from smaller problem sizes \cite{wang1994solving} or approximations such as MPS-DMRG \cite{dorando2007targeted}, systematically guessing from results of previous iterations of the algorithm \cite{dorando2007targeted,baiardi2019optimization}, and approximations of the local density of states \cite{kecceli2018siesta,williams2020shift} via classical techniques such as the Kernel Polynomial Method \cite{weisse2006kernel}.

As an illustration, Fig.~(\ref{fig:lih dissociation}) shows numerical simulations of our algorithm applied to prepare highly excited ($k = 500^{th}$) eigenstates of a 10-qubit molecular Hamiltonian -- the Lithium Hydride (LiH) molecule in the STO-3G basis -- at different bond lengths to recover its potential energy surface, compared against spectral folding, another existing variational technique. We pre-compute its spectrum exactly as comparison, and to select the shifts at different bond lengths. We consistently obtain superior results (smaller errors and standard deviations) with shift-inversion across all bond lengths considered, which is expected to further differentiate with increasing problem sizes.

In practice, further physical quantities can be extracted from the output quantum states in the form of easily accessible statistical quantities (see e.g. \cite{whitfield2011simulation} for relevant observables such as the number and excitation operators in the form of sums of Pauli terms).

\begin{figure}
\centering
\begin{subfigure}{0.47\textwidth}
\includegraphics[width=\textwidth]{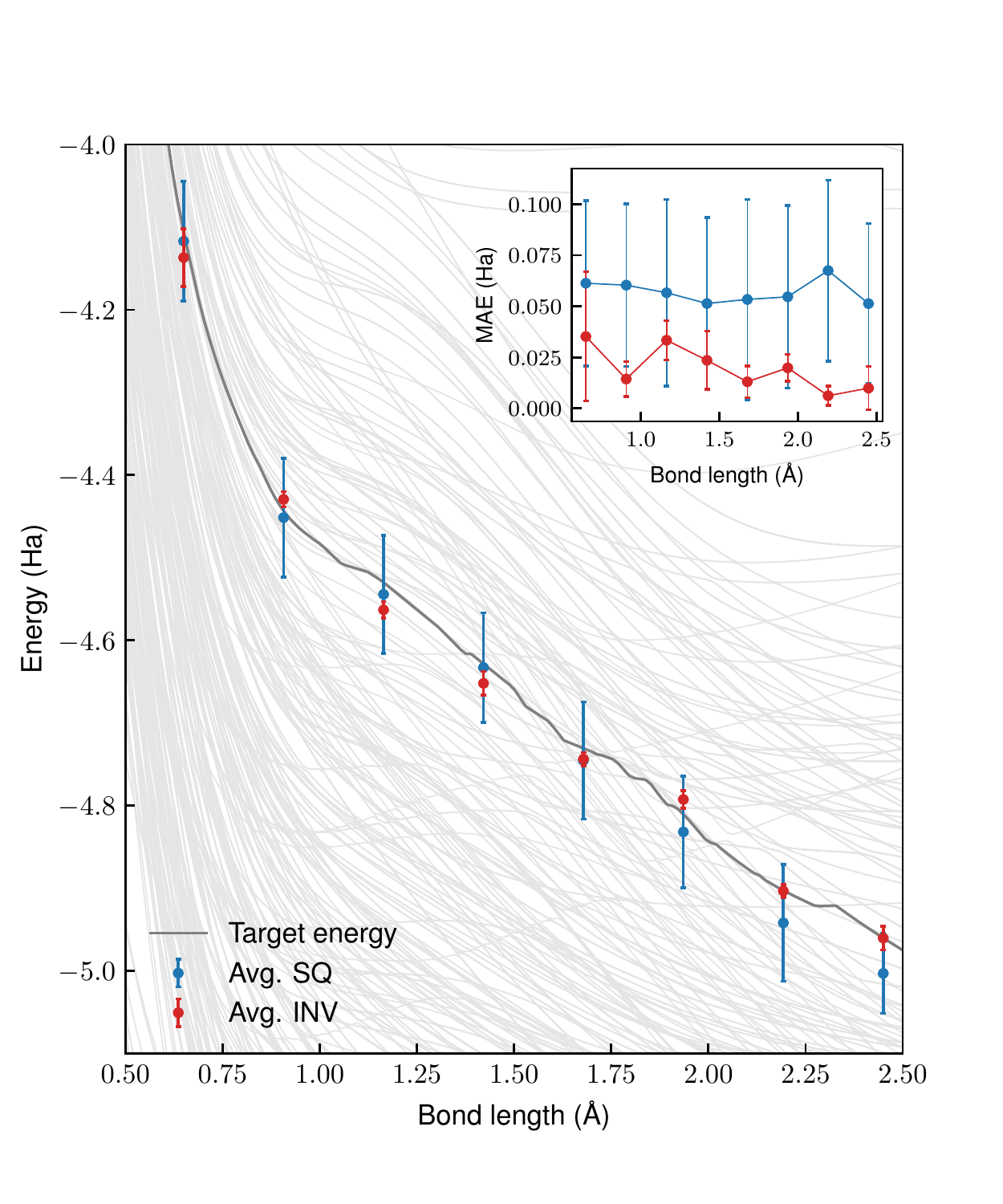}
\end{subfigure}
\caption{Potential energy surface of LiH computed with the shift-inversion algorithm (blue) and the spectral folding algorithm (red), compared to ED (dark grey line). An eigenstate near the middle of the spectrum ($k = 500$) is targeted at all bond lengths. Exact potential energy surfaces for other energies are also shown (light grey lines). Each datapoint is obtained by averaging 100 runs of the algorithm with randomly chosen initial parameters, and standard deviations are indicated by vertical error bars. The ansatz (alternating single-qubit rotations and fully-entangling CNOTs, specified in Appendix.~\ref{appendix:additional_numerics}) with 12 layers and the SLSQP/BFGS optimizers are used. Inset shows Mean Absolute Error (MAE) from exact results, with errors bars indicating standard deviations. We observe better accuracy and consistent convergences with shift-inversion.}
\label{fig:lih dissociation}
\end{figure}

\subsection{Spin systems} \label{sec:applications_spin}
Many-body quantum spin systems with local interactions present another natural application domain for our proposed algorithm. They are particularly interesting from a quantum computing viewpoint due to the difficulty of their simulation on classical computers (when approximative representations such as tensor networks are inapplicable), while remaining efficient on quantum computers \cite{childs2018toward}. They are also attractive for near-term applications of quantum computers due to milder resource requirements that result from the locality and sparsity of interactions between spins.

In particular, we consider systems exhibiting many-body localization (MBL), which have recently become a subject of interest in our context due to their remarkable information-theoretic properties \cite{bauer2014analyzing,smith2016many}. Depending on the magnitude of disorder present, their eigenstates experience a transition in entanglement between volume-law and area-law, with the former case requiring full $2^n$-dimensional wavefunctions to be specified, while the latter case permits efficient description by tensor networks. Furthermore, due to the presence of a mobility edge in energy, localization-ergodic transitions occur at different critical disorders for different energies \cite{luitz2015many,abanin2019manybody}. A detailed understanding of this dynamical phase transition therefore entails a study of the entire spectrum beyond the ground state, requiring the preparation and study of highly excited eigenstates. 

Concretely, we present numerical results on preparing excited eigenstates of the 1D disordered isotropic Heisenberg model of $L$ spins with the Hamiltonian:
\begin{equation} \label{eq:xxx_ham}
    H_{\text{MBL}} = J \sum_{i=1}^{L-1} \vec{S}_i \cdot \vec{S}_{i+1} + \sum_{i=1}^L h_i S^z_i,
\end{equation}
where $\vec{S}_i = (S^x_i, S^y_i, S^z_i)$ is the vector of spin-1/2 operators at site $i$, $J = 1$ the spin-spin interaction strength, and $h_i$ the strength of the transverse magnetic field at site $i$ which is randomly distributed in the interval $[-W,W]$, where $W$ is the disorder strength. Notably, current state-of-the-art results obtained with classical algorithms for this model across both phases are ultimately limited to  $L \sim 30$ spins due to exponentially demanding resource requirements \cite{pietracaprina2018shift,sierant2020polynomially}.

We illustrate how our algorithm can be used to extract information from the full excitation spectrum of a Hamiltonian without prior knowledge on eigenstate energies, as is typical for studies of MBL across its full spectrum \cite{luitz2015many,khemani2016obtaining,devakul2017obtaining,yu2017finding,pietracaprina2018shift}. In short, the algorithm is initialized by selecting an initial state $\ket{b(\theta_{\text{init}})}$ and taking its energy $\expt{b(\theta_{\text{init}})}{H}{b(\theta_{\text{init}})}$ to be the shift $\sigma$. When prior information on the structure of the eigenstates of $H$ are available, the intial state is chosen accordingly to maximize overlap with true eigenstates. Optimization of the shift-inverted energy then yields an eigenstate with energy closest to the initial energy. The procedure is repeated for different initial states with energies selected within a range of interest, from which further physical properties can be extracted to obtain coarse-grained information on the spectrum of $H$.

Fig.~(\ref{fig:mbl spectrum}) shows numerical results of the above procedure applied to $H_{\text{MBL}}$ for $L=12$ spins in the localized phase, when $W = 9J$ (for a particular disorder realization). Here, 10 random product states of the form $\ket{\uparrow \uparrow ... \downarrow}$ with energies distributed evenly within the spectrum of $H_{\text{MBL}}$ are chosen as initial states, as they are known to be perturbatively close to true eigenstates \cite{serbyn2013local}. The main plot shows optimization histories for the 10 chosen initial states (labelled by different colors) together with exact energies of the eigenstates of $H_{\text{MBL}}$ in the background (light grey horizontal lines), while the inset shows the probability distribution of the optimized states in the energy eigenbasis  (colors again label different initial states). The algorithm consistently converges to eigenstates of $H_{\text{MBL}}$ with high overlap with true eigenstates. Additional numerical results for Eq.~(\ref{eq:xxx_ham}) verifying the better scaling of our method compared to spectral folding are discussed in Appendix.~\ref{appendix:mbl}.

As noted in Section. \ref{sec:near_term_analysis}, the near-term implementation for this class of problem is especially interesting, due to low implementation cost (a further reduction in the number of terms by a factor of $\sim$3 for the case of Eq.~(\ref{eq:xxx_ham}) can be achieved via anti-commutative grouping), favourable scaling results from \cite{bravo2019variational} for a similar Hamiltonian, and the success of analogous classical approaches that represent solution of QLSP with MPS.

Finally, we remark on the illustrative nature of the above numerics; in practice, eigenstates in the MBL phase in 1D are well-approximated by MPS \cite{khemani2016obtaining,yu2017finding}, so it is unlikely that an exponential improvement over methods based on classical DMRG combined with spectral transformation can be found from quantum computers for this task. Nonetheless, as discussed in Section.~\ref{sec:advantage}, this does not rule out meaningful polynomial improvements \cite{babbush2021focus,lee2023evaluating}. 

\begin{figure}
\centering
\begin{subfigure}{0.5\textwidth}
\includegraphics[width=\textwidth]{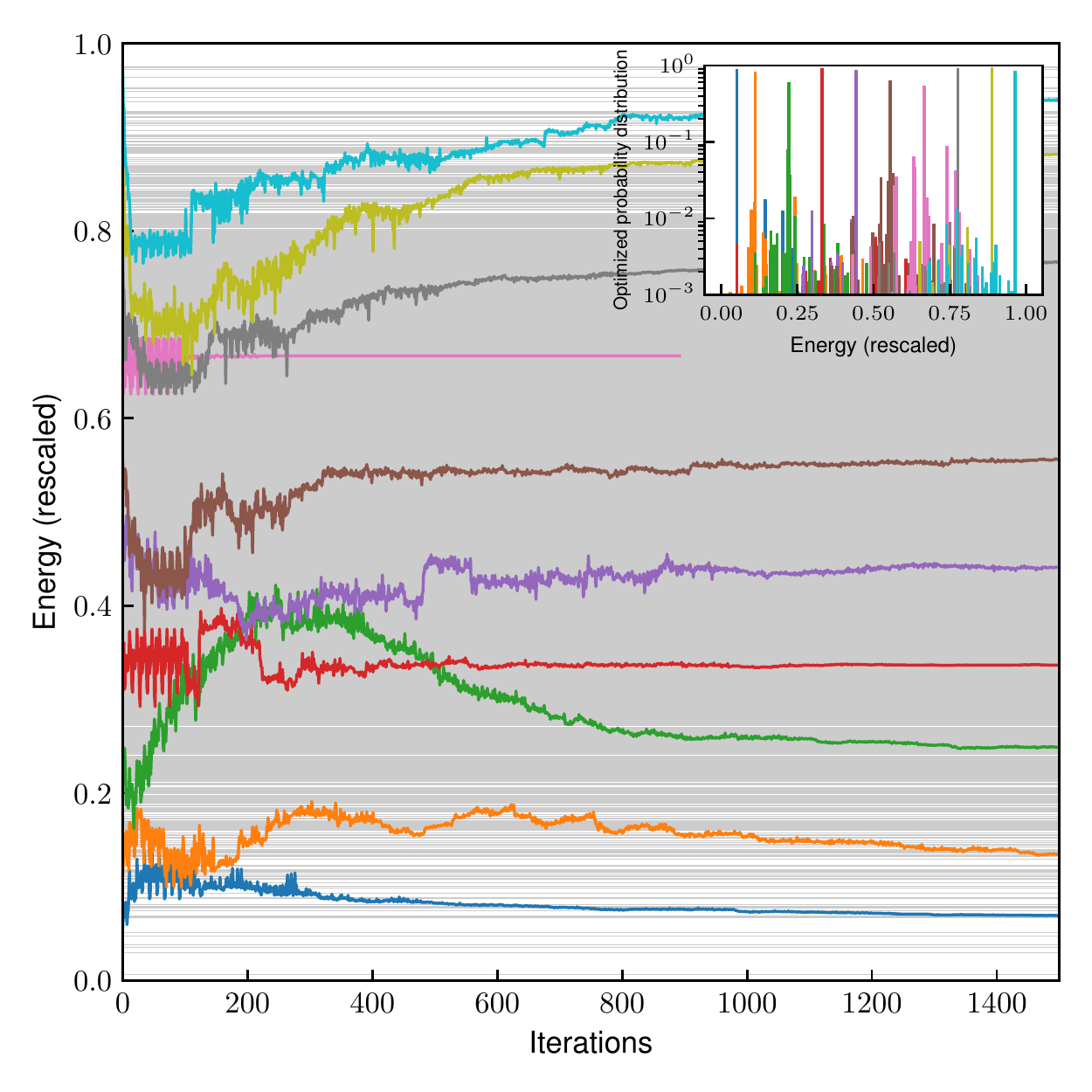}
\end{subfigure}
\caption{Extraction of eigenstates from the spectrum of the MBL Hamiltonian of Eq.~(\ref{eq:xxx_ham}) with $L=12$ spins, at $W = 9J$. 10 initial product states with energies distributed evenly in the spectrum of the Hamiltonian were chosen as initial states of the optimization procedure, with their energies taken as the shift $\sigma$. The main plot shows energy histories of different initial states (labelled by 10 colors) throughout optimization, with grey horizontal lines denoting energies of all eigenstates of $H_{\text{MBL}}$. The inset shows the probability distribution of the final optimized states (labelled by the same 10 colors) in the same eigenbasis, where the y-axis is in logarithmic scale. In the main plot, the grey region due to overlapping grey horizontal lines near the middle of the spectrum is due to the high density of states. An ansatz (alternating single-qubit rotations and fully-entangling CNOTs, specified in Appendix.~\ref{appendix:additional_numerics}) with 7 layers and the COBYLA optimizer is used.}
\label{fig:mbl spectrum}
\end{figure}




\section{Discussion and outlook} \label{sec:discussion}
This work introduces a quantum algorithm for the task of preparing highly excited eigenstates in a targeted manner, which is enabled by the ability to evaluate expectation values of inverse Hamiltonians on a quantum computer. The core advantage of the proposed algorithm lies in its natural and efficient usage of QLSP solvers, which circumvents their usual drawbacks when applied to generic linear algebraic problems. We detail and analyze implementations of the algorithm suitable for both near-term and fault-tolerant error corrected quantum computers. We also examine conditions and situations under which our algorithm outperforms existing classical and quantum algorithms, and find problems arising from physical contexts such as chemistry and many-body physics to be promising candidates. Numerical simulations on quantum chemical models and disordered many-body spin systems further supplement our arguments, and illustrate the algorithm's practical applications.

The practical utility of our algorithm paves the way for possible studies of physics beyond the ground state on quantum computers. In comparison to the well-studied ground state problem, the case for highly excited eigenstates remain relatively unexplored, in part due to the lack of scalable state preparation methods on quantum computers. Our study sets the stage for a number of follow-up investigations in this direction, including the need to further understand the representation of excited eigenstates on quantum computers (which manifests as the availability of efficient ansatz choices $U(\theta)$ in our context), and more detailed investigations on problem classes that satisfy the requirements of our algorithm, which are subjects of ongoing work.

Another main contribution of our work is in identifying a natural context for the application of QLSP solvers. Following discussions in Sections.~\ref{sec:algorithm} and \ref{sec:advantage}, this is a non-trivial task in general due to specific requirements that must be satisfied for QLSP solvers to remain efficient. Earlier attempts focused on replacing linear-algebraic subroutines present in generic settings such as machine learning \cite{wiebe2012quantum,lloyd2013quantum,rebentrost2014quantum} and the solution of differential equations \cite{berry2014high} with QLSP solvers, but the lack of sufficient structure prohibits a thorough analysis with respect to the QLSP solvers' conditions \cite{aaronson2015read}. Along with recent work in similar directions \cite{tong2021fast}, our work reveals that a remarkably natural context can be established when the inputs and output of the QLSP solver correspond to physically meaningful objects such as Hamiltonians and eigenstates of physical systems. Furthermore, while the present work leverages the ability to evaluate inverse expectation values of the form Eq.~(\ref{eq:cost_func}) to prepare highly excited eigenstates on quantum computers, we expect our analyses to be relevant beyond this task. Indeed, functionals of the same form appear in other contexts, such as in studies of the dynamics and high temperature properties of strongly interacting many-body systems via Green's function, which can directly leverage our implementations and analyses. We elaborate on these alternative uses in Appendix.~\ref{appendix:applications}.

As discussed in Section.~\ref{sec:advantage}, the variational aspect of our algorithm precludes vigorous scaling arguments, which limits our knowledge on its performance at large scales. Combined with the difficulty of simulating large quantum circuits beyond sizes accessible by exact diagonalization, performance guarantees will likely require implementations on real quantum computers to establish. On this front, hardware advancements in recent years have lead to the availability of quantum computers with qubit counts in the order of 100s (see, e.g. \cite{kim2023evidence}). Consistent with these advancements, our resource analyses indicate feasibility, especially for spin systems defined on a low dimensional lattice (e.g. of the kind discussed in Section.~\ref{sec:applications_spin}, and possibly with a lattice geometry conforming to the device's topology), where the near-term implementation with VQLS (described in Section.~\ref{sec:near_term}) has a manageable $O(nD)$ resource requirement. This opens up exciting avenues for experimentation in the near-term in classically intractable regimes, such as in studies of disordered spin systems across both ergodic-MBL phases on dimension $D>1$, where a complete understanding of this phase transition necessitates the study of highly excited eigenstates due to the presence of a mobility edge. These consideration also introduces additional practical challenges, such as the development of hardware-friendly QLSP solvers, ansatz designs for excited eigenstates, and strategies to address more resource-intensive problems (such as via efficient term grouping, encoding, and measurement \cite{tilly2022variational, choi2023measurement}).

Finally, as detailed in Section.~\ref{sec:other_components}, a limitation of our algorithm is the unfavorable dependence on the spectral gap $\Delta$, an issue common to classical methods. This complicates implementation when targeting eigenstates located at regions in the spectrum with exponentially growing density of states, which is expected to be common beyond specific cases (such as certain electronic structure problems \cite{kecceli2016shift} and Anderson localized models \cite{elsner1999anderson,pietracaprina2018shift}). As is the case for classical methods \cite{pietracaprina2018shift,baiardi2019optimization}, preconditioning and root-homing methods are expected to be crucial in practical implementations, and their adaptations for quantum algorithms are active areas of research \cite{tong2021fast}. It is also worth investigating whether information on the structure of eigenstates and the presence of symmetries can be directly encoded in the quantum algorithm itself (e.g. in the form of the cost function, or that of the ansatz) to resolve exponentially concentrating eigenstates, in a manner more natural than classical methods \cite{khemani2016obtaining,devakul2017obtaining}. Otherwise, certain situations -- such as studies of the kind described in Section.~\ref{sec:applications_spin} in MBL, where only global, relative information on the spectrum is relevant -- permit a relaxation of this dependence, by allowing for a superposition of eigenstates within a fixed energy interval, instead of targeting a specific eigenstate. 

\section{Acknowledgments}
We thank Chee-Kong Lee, Liang Shi, and Patrick Rebentrost for useful discussions. LCK and SHC thank the Ministry of Education, Singapore and the National Research Foundation Singapore for their support.

\clearpage

\appendix

\section{The HHL algorithm as a subroutine} \label{appendix:hhl}
This section briefly describes a standard implementation of the HHL algorithm and its modification in our proposed algorithm (shown in Fig.~(\ref{fig:si_hhl})), and establishes notations used throughout Sections.~\ref{sec:fault_tol} and \ref{sec:fault_tol_anal}.

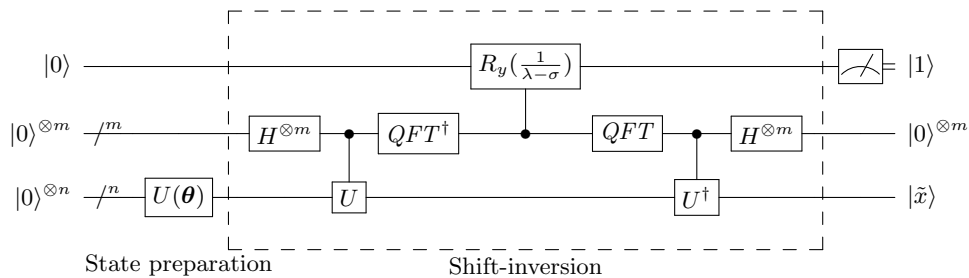
\begin{figure*}
\[
\begin{array}{c}
\Qcircuit @C=0.5em @R=1em {
      &&&&& \mbox{} &&&& \mbox{} &&&&&&&\\
      \lstick{\ket{0}}    & \qw & \qw      & \qw & \qw & \qw & \qw & \qw & \qw      & \qw               & \qw                                                & \gate{R_y(\frac{1}{\lambda - \sigma})} & \qw                                        & \qw              & \qw   & \qw    & \qw  & \meter & \rstick{\ket{1}} \cw \\
      \lstick{\ket{0}^{\otimes m}}  & \qw & {/^m}\qw & \qw & \qw & \qw & \qw & \qw & \gate{H^{\otimes m}} & \ctrl{1}          & \gate{\mathit{QFT}^\dagger} & \ctrl{-1}                & \gate{\mathit{QFT}} & \ctrl{1}         & \gate{H^{\otimes m}} & \qw  & \qw  & \qw   & \rstick{\ket{0}^{\otimes m}} \qw \\
      \lstick{\ket{0}^{\otimes n}}  & \qw & {/^n}\qw & \qw & \qw & \gate{U(\bm{\theta})} & \qw & \qw & \qw      & \gate{U} & \qw                                                & \qw                      & \qw                                        & \gate{U^\dagger} & \qw  & \qw  & \qw  & \qw    & \rstick{\ket{\tilde{x}}} \qw \\
      &&&&& \mbox{} &&&& \mbox{} &&&&&&&\\
      &&&&& \mbox{State preparation} &&&&&& \mbox{Shift-inversion} &&&&&&&\\
    \relax \gategroup{1}{8}{5}{16}{.7em}{--}
    }
\end{array}
\]
\caption{The modified HHL circuit for shift-inversion applied after a variational state preparation circuit $U(\bm{\theta})$. At the end of the optimization procedure, $\ket{b(\bm{\theta^*})} = U(\bm{\theta^*})\ket{0}^{\otimes n}$ prepares the state which minimizes the shift-inverted expectation value $\bra{b(\bm{\theta})} (H-\sigma)^{-1} \ket{b(\bm{\theta})}$, corresponding to an approximation of the eigenstate with energy close to $\sigma$. $n$ qubits form the \textit{solution register}, $m$ qubits form the \textit{eigenvalue register}, and a final qubit forms the \textit{ancillary register}.}
\label{fig:si_hhl}
\end{figure*}

For an $n$-qubit input quantum state $\ket{b}$ of dimension $N = 2^n$ and input matrix $H$ of dimension $N \times N$, sparsity $s$ and condition number $\kappa$, the HHL algorithm \cite{harrow2009quantum} solves the QLSP with a runtime that scales as $O(\log(N) s^2 \kappa^2 / \epsilon)$, which notably achieves an exponentially improved scaling in matrix dimension $N$ compared to classical LSP solvers. The algorithm's circuit consists of 3 main sets of registers : $n$ qubits form the \textit{solution register}, $m$ qubits form the \textit{eigenvalue register}, and a final qubit forms the \textit{ancillary register}. Expanding $\ket{b}$ in the eigenbasis of $H = \sum_i \lambda_i \ket{\lambda_i} \bra{\lambda_i}$ yields $\ket{b} = \sum_i \beta_i \ket{\lambda_i}$. The following sequence of operations describe the essential components of the HHL algorithm:
\begin{enumerate}
    \item \textbf{Preparation of input quantum state $\ket{b}$} : Starting from an easily preparable quantum state $\ket{0}^{\otimes n}$, the unitary operation $U$ prepares the input quantum state $\ket{0}^{\otimes n} \rightarrow \ket{b}$ in the solution register.
    
    \item \textbf{Quantum Phase Estimation (QPE)} : QPE is applied with the solution register as the target and the eigenvalue register as the control, resulting in the transformation:
    \begin{equation}
        \sum_i \beta_i \ket{0}^{\otimes m} \ket{\lambda_i} \rightarrow \sum_i \beta_i \altket{\tilde{\lambda}_i} \ket{\lambda_i},
    \end{equation}
    where $\altket{\tilde{\lambda}_i}$ encodes the eigenvalue $\lambda_i$ up to $m$ bits, denoted $\tilde{\lambda}_i$.
    
    \item \textbf{Eigenvalue inversion} : A controlled $R_y (\phi)$ rotation with the eigenvalue register as the control, and the ancillary register as the target results in the transformation:
    \begin{equation} \label{eq:eigval_inv}
    \begin{split}
       \sum_i \beta_i \altket{\tilde{\lambda}_i} \ket{\lambda_i} \ket{0} \rightarrow & \sum_i \bigg[ \beta_i \sqrt{1 -  \frac{C^2}{\tilde{\lambda}_i^2}} \altket{\tilde{\lambda}_i} \ket{\lambda_i} \ket{0} \\ 
       & + \beta_i \frac{C}{\tilde{\lambda}_i} \altket{\tilde{\lambda}_i} \ket{\lambda_i} \ket{1} \bigg],
    \end{split}
    \end{equation}
    with $\phi = \text{arccos}(C/\tilde{\lambda})$, where $C$ is a chosen constant. The circuit to encode $\phi$ onto another set of $d$ qubits (which we hide in our description here) can be executed efficiently with arithmetic circuits, which requires resources that scale only polynomially with $d$ \cite{lin2022lecture,vazquez2022enhancing}. To ensure the unitarity of this step, the choice of $C$ must also be chosen to satisfy $C \leq \tilde{\lambda}_i ~~ \forall j$. At the same time, it should be maximized (to maximize postselection probability $p_1$, due to Eq.~(\ref{eq:proba1}) later), resulting in the choice:
    \begin{equation} \label{eq:c_cond}
        C = \min_i \{ \tilde{\lambda}_i \}.
    \end{equation}
    \item \textbf{Inverse QPE} : With the solution register as the target and the eigenvalue register as the control, inverse QPE un-computes the eigenvalue register, allowing it to be discarded:
    \begin{equation}
        \begin{split}
       \sum_i \left[\beta_i \sqrt{1 -  \frac{C^2}{\tilde{\lambda}_i^2}} \altket{\tilde{\lambda}_i} \ket{\lambda_i} \ket{0} + \beta_i \frac{C}{\tilde{\lambda}_i} \altket{\tilde{\lambda}_i} \ket{\lambda_i} \ket{1} \right]
       \rightarrow \\ 
       \sum_i \left[ \beta_i \sqrt{1 -  \frac{C^2}{\tilde{\lambda}_i^2}} \ket{\lambda_i}  \ket{0} + \beta_i \frac{C}{\tilde{\lambda}_i} \ket{\lambda_i} \ket{1} \right].
        \end{split}
    \end{equation}
    
    \item \textbf{Measurement of ancilla qubit and post-selection } : Finally, measurement of the ancilla qubit, and post-selecting only states with outcome `1' results in the quantum state in the solution register to be:
    \begin{equation}
        \sum_i \frac{\beta_i}{\tilde{\lambda}_i} \ket{\lambda_i} \equiv \ket{\tilde{x}} \approx \ket{x},
    \end{equation}
    where $\ket{\tilde{x}}$ corresponds to an approximation to $\ket{x} = A^{-1} \ket{b} = \sum_i \frac{\beta_i}{\lambda_i} \ket{\lambda_i}$, the solution of the QLSP. The probability $p_1$ of successfully observing the outcome `1' is:
    \begin{equation} \label{eq:proba1}
        p_1 = \left\Vert \sum_i C \frac{\beta_i}{\tilde{\lambda}_i} \ket{\lambda_i} \right\Vert^2 \approx C^2 \| A^{-1} \ket{b} \|^2,
    \end{equation}
    which depends on both $C$ and the norm of $A^{-1} \ket{b}$.

\end{enumerate}

Within our algorithm, the eigenvalue inversion part of the HHL circuit is modified to include a shift $\sigma$, while the input quantum state $\ket{b(\theta)}$ is prepared by a unitary ansatz $U(\theta)$, resulting in the circuit shown in Fig.~(\ref{fig:si_hhl}). This changes the choice of the constant $C$ from Eq.~(\ref{eq:c_cond}) to:
\begin{equation} \label{eq:c_cond_sigma}
        C = \min_i \{ |\tilde{\lambda}_i - \sigma| \},
    \end{equation}
which introduces a dependence on the spectral gap $\Delta$ in $p_1$, due to Eq.~(\ref{eq:proba1}). 

Upon successfully post-selecting, the observable $\tilde{H}$ is measured on the output quantum state $\ket{\tilde{x}}$ to compute the cost function $C(\theta) = \bra{b(\theta)} \tilde{H}^{-1} \ket{b(\theta)}$, and a classical optimizer selects the next set of parameters $\theta'$ in an iterative manner to search for the optimal set of parameters $\theta^*$ that minimizes $C(\theta)$. This procedure results in a quantum state $\ket{b(\theta^*)}$ that best approximates the target eigenstate $\ket{\lambda_k}$ within the set of states reachable by the ansatz $U(\theta)$. As a corollary, $p_1$ increases as optimization progresses, tending to unity as $\beta_k$ approaches 1 (from Eq.~(\ref{eq:proba1}), if the $\beta_i's$ concentrate at $\beta_k$, $C$ approximately cancels with the only remaining denominator $\tilde{\lambda}_k - \sigma$).

\section{Additional numerical results} \label{appendix:additional_numerics}
In this section, we present additional numerical simulations on the algorithm's scaling in performance compared to the shift-squared spectral transform, and typical cost landscapes for both the fault-tolerant and near-term implementation (via HHL and the VQLS respectively).

Unless specified, classical simulations of quantum circuits in our work were performed exactly using the Qiskit \cite{aleksandrowicz2019qiskit} software package. For the HHL algorithm, Hamiltonian simulation within QPE and inverse QPE was performed exactly (which suppresses errors in implementing $e^{iHt}$), while eigenvalue inversion was performed through uniformly controlled rotations \cite{mottonen2004transformation}. The form of the layered variational ansatzes (for both $U(\theta)$ and $V(\phi)$) used in this section and the main text is displayed in Fig.~(\ref{fig:numerical_ansatz}), and consists of alternating layers of single qubit rotations and fully-entangling CNOTs. The default choice of classical optimizer is either SLSQP or BFGS.

\begin{figure*}
\[
\begin{array}{c}
\Qcircuit @C=1.0em @R=0.2em @!R { \\
	 	\nghost{{q}_{0} :  } & \lstick{{q}_{0} :  } & \gate{\mathrm{RX}\,(\mathrm{{\ensuremath{\theta_0}}})} \barrier[0em]{2} & \qw & \ctrl{1} & \ctrl{2} & \qw \barrier[0em]{2} & \qw & \gate{\mathrm{RX}\,(\mathrm{{\ensuremath{\theta_3}}})} \barrier[0em]{2} & \qw & \ctrl{1} & \ctrl{2} & \qw \barrier[0em]{2} & \qw & \gate{\mathrm{RX}\,(\mathrm{{\ensuremath{\theta_6}}})} & \qw & \qw\\
	 	\nghost{{q}_{1} :  } & \lstick{{q}_{1} :  } & \gate{\mathrm{RX}\,(\mathrm{{\ensuremath{\theta_1}}})} & \qw & \control\qw & \qw & \ctrl{1} & \qw & \gate{\mathrm{RX}\,(\mathrm{{\ensuremath{\theta_4}}})} & \qw & \control\qw & \qw & \ctrl{1} & \qw & \gate{\mathrm{RX}\,(\mathrm{{\ensuremath{\theta_7}}})} & \qw & \qw\\
	 	\nghost{{q}_{2} :  } & \lstick{{q}_{2} :  } & \gate{\mathrm{RX}\,(\mathrm{{\ensuremath{\theta_2}}})} & \qw & \qw & \control\qw & \control\qw & \qw & \gate{\mathrm{RX}\,(\mathrm{{\ensuremath{\theta_5}}})} & \qw & \qw & \control\qw & \control\qw & \qw & \gate{\mathrm{RX}\,(\mathrm{{\ensuremath{\theta_8}}})} & \qw & \qw\\
\\ }
\end{array}
\]
\caption{The layered variational ansatz used in all numerical simulations, shown in the case of $n=3$ qubits and 2 layers/repetitions. The circuit begins with a set of individually parametrized single-qubit RX rotation gates. Subsequently for each layer, individually parametrized single-qubit RX rotation gates are followed by CZ gates applied among all qubits. The number of parameters scale as $n(1+n_{layers})$.}
\label{fig:numerical_ansatz}
\end{figure*}

\subsection{Performance compared to shift-squared spectral transform} \label{appendix:scaling}
This section investigates the performance of our algorithm with increasing problem sizes compared to an existing, conceptually similar targeted algorithm. In particular, we compare the performance of the optimization procedure present in our algorithm to spectral folding/shift-squared spectral transformation. Replacing the shift-inverted expectation value of our algorithm ($\bra{b(\theta)} \tilde{H}^{-1} \ket{b(\theta)}$, Step. 3) with the shift-squared expectation value $\bra{b(\theta)} \tilde{H}^2 \ket{b(\theta)}$ results in the spectral folding/shift-squared spectral transform, which has been proposed to prepare excited eigenstates on near-term quantum computers \cite{peruzzo2014variational,santagati2018witnessing,zhang2021adaptive}. It can be evaluated straightforwardly on a quantum computer either from the second moment of the observable $\tilde{H}$, or by expanding and evaluating the terms of $\tilde{H}^2$ separately, both having runtimes quadratically more expensive than estimating $\langle H \rangle$. As mentioned in the main text, while computationally less demanding than shift-inversion, it faces difficulties when targeting eigenstates located at regions of high local density of states (LDOS) as the quadratic spectral transformation further shrinks energy gaps. On the contrary, shift-inversion either repels nearby eigenstates or transforms them to the other end of the spectrum, and conventionally outperforms shift-squaring in classical contexts \cite{pietracaprina2018shift} (c.f. Section.~\ref{sec:relation_quantum}).

In the following subsections, we compare the two algorithms' performances when targeting eigenstates located at positions with increasing LDOS, and for increasing system sizes. We consider diagonal matrices with LDOS that concentrate at the middle of the spectrum, and spin Hamiltonians arising from the disordered isotropic Heisenberg model.

\subsubsection{Scaling with increasing DOS}
To investigate the performance of the algorithm with increasing LDOS, we first consider matrices with LDOS that concentrate in the middle of the spectrum as inputs to our algorithm, and attempt to solve for eigenstates located at different energy densities. In particular, we consider matrices with eigenvalues in $(0,1)$ distributed either according to a Gaussian ($p_G(x) = \frac{1}{s \sqrt{2\pi}}e^{-\frac{1}{2}(\frac{x-\mu}{s})^2}$) or Laplace ($p_L(x) = \frac{1}{2b}e^{-\frac{|x-\mu|}{b}}$) distribution, with parameters $\mu = 0.5$, $s = 0.1$ and $b = 1$. To target an eigenstate with energy $\lambda_k$, the choice of the shift is always $\sigma = \lambda_k + 0.25\Delta$.

Two simplifications were adopted in our numerics in this subsection. Firstly, input matrices were taken to be diagonal in the computational basis, so that their eigenstates are the computational basis states, which are easily preparable product states. This guarantees that target eigenstates are always located within the set of reachable states of our ansatz (Fig.~(\ref{fig:numerical_ansatz})). The spectrum of the input matrix can also be specified by distributing the values of its diagonal entries in an arbitrary manner. Secondly, both cost functions were computed exactly via matrix multiplication, which removes errors from the QLSP solver and expectation estimation. Taken together, these simplifications remove the issue of reachability and imprecisions due to solving the QLSP and expectation estimation, so that the results purely reflect how different choices of cost functions impact the algorithm's final accuracy.

\begin{figure*}
\centering
\begin{subfigure}{0.47\textwidth}
    \includegraphics[width=\textwidth]{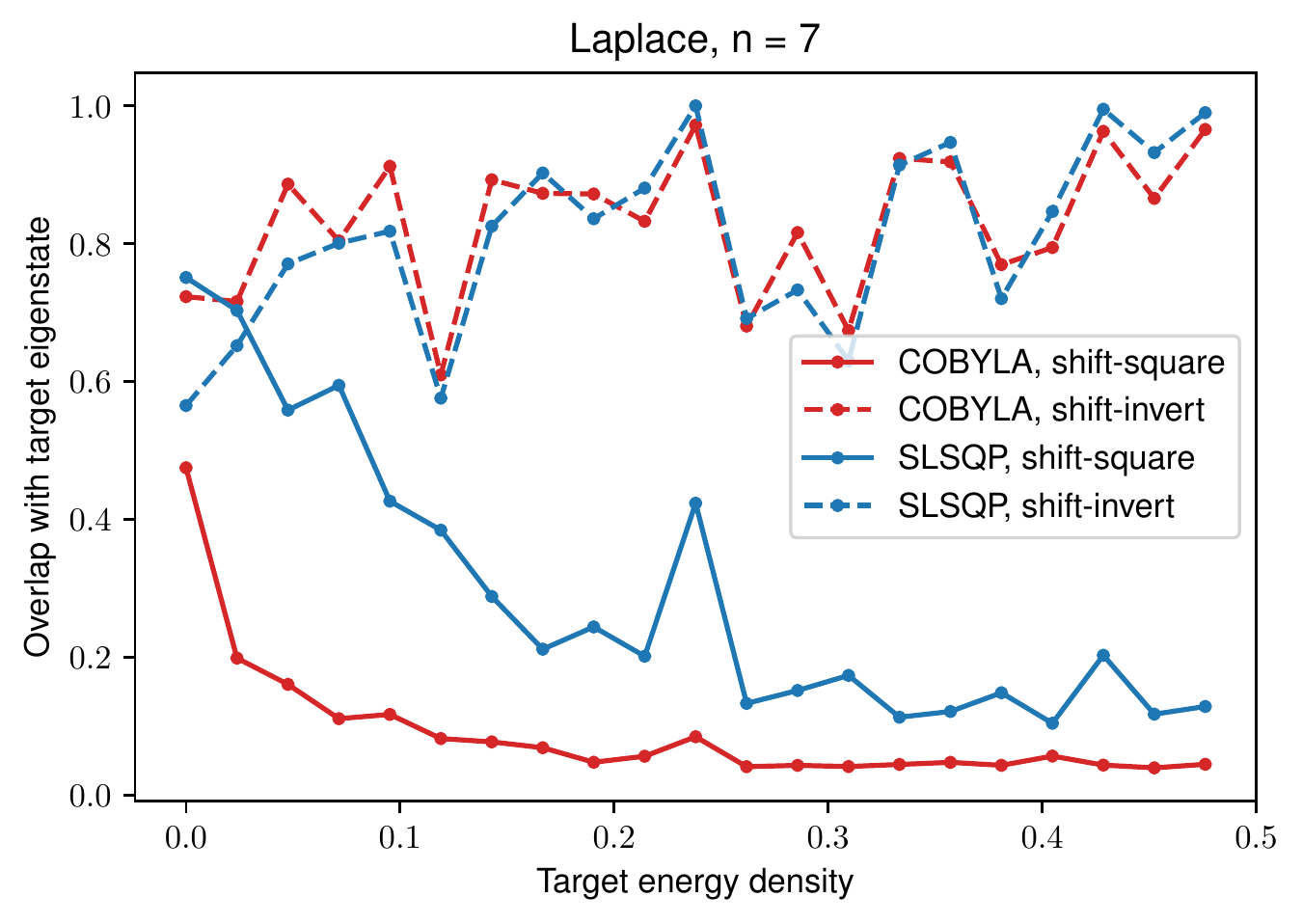}
\end{subfigure}
\begin{subfigure}{0.47\textwidth}
    \includegraphics[width=\textwidth]{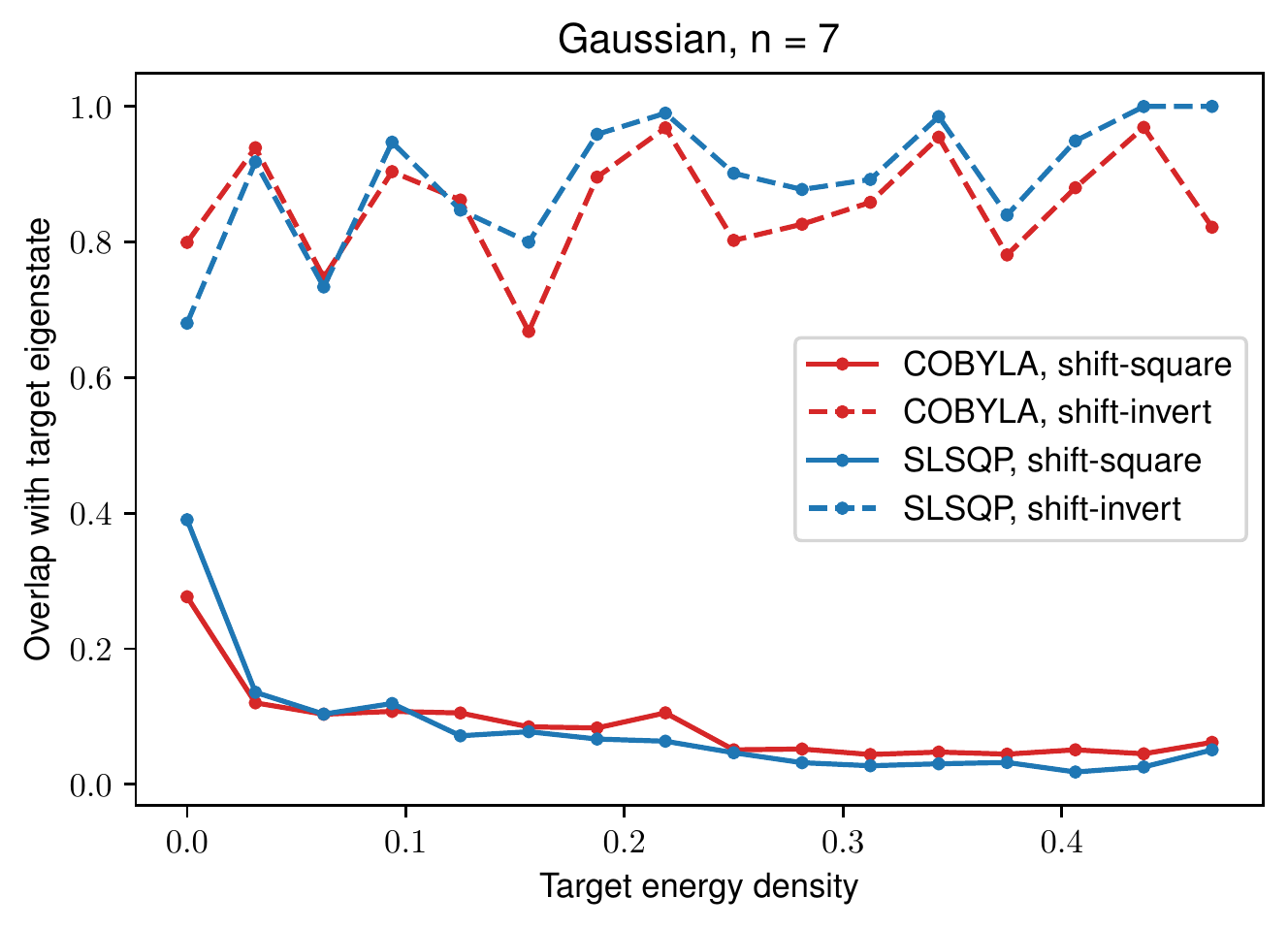}
\end{subfigure}
\caption{Plots displaying the performance of shift-inversion (dotted lines) versus shift-squaring (solid lines) for different Hamiltonians (as title of subplot) and optimisers (as different colored lines). Each datapoint is averaged over 150 optimization runs with random initial parameters to quantify the average performance of the algorithm. The y-axis shows the overlap with the target eigenstate (higher is better), while the x-axis corresponds to the energy density of target eigenstates : the smallest value 0 corresponds to the ground state, while the largest value 0.5 corresponds to the state at the middle of the spectrum ($2^7/2 = 64$-th eigenstate), where the LDOS is maximum. Much higher overlap is observed in most cases with shift-inversion compared to shift-square, especially towards the middle of the spectrum, as the LDOS tends to its maximum.}
\label{fig:scaling}
\end{figure*}

Simulation results for $2^7$-dimensional matrices are displayed in Fig.~(\ref{fig:scaling}). In most cases, for both types of matrices, much higher overlap is observed with shift-inversion compared to shift-squaring. This is especially pronounced towards the middle of the spectrum (where the LDOS tends to its maximum), where shift-inversion consistently succeeds with near-unity overlap, while spectral folding fails with near-zero overlap. These results confirm our intuition on their performances, and verifies the trainability of the shift-inverted cost function.

\subsubsection{Scaling with size for MBL Hamiltonian} \label{appendix:mbl}
Next, we compare the performance of the algorithms when applied to lattice models of increasing sizes, here chosen to be the 1D disordered isotropic Heisenberg model of $L$ spins with the Hamiltonian:
\begin{equation} \label{eq:appendix_xxx_ham}
    H_{\text{MBL}} = J \sum_{i=1}^{L-1} \vec{S}_i \cdot \vec{S}_{i+1} + \sum_{i=1}^L h_i S^z_i,
\end{equation}
where $\vec{S}_i = (S^x_i, S^y_i, S^z_i)$ is the vector of spin-1/2 operators at site $i$, $J = 1$ the spin-spin interaction strength, and $h_i$ the strength of the transverse magnetic field at site $i$ which is randomly distributed in the interval $[-W,W]$, where $W$ is the disorder strength. Periodic boundary conditions are adopted. Sufficiently large values of $W$ drives the model into many-body localization, in which case its density of states can be approximated as a Gaussian concentrating at the middle of the spectrum. As mentioned in the main text, the slowly growing number of terms and parallelization opportunities due to the commutation of terms renders our algorithm implementable on near-term quantum computers via the VQLS.

In our numerics, we choose $J = 1$ and $W = 8$ corresponding to the MBL phase, and target the middle ($k = 2^{L}/2$-th) eigenstate for increasing $L$, using the overlap of the optimized state with the target eigenstate as a performance metric. Again, we use the ansatz of Fig.~(\ref{fig:numerical_ansatz}) with 4 layers, and take the average over 100 disorder realizations and randomized initial parameters at each $L$. The results are shown in Fig.~(\ref{fig:mbl}); we observe consistently higher overlaps with shift-inversion. Notably, at larger problem sizes, shift-squaring always fails, preparing an eigenstate with low overlap with the target, while shift-inversion consistently obtains overlaps close to unity on at least one of the realizations (i.e. the maximum of the overlaps at each $L$ is always $< 1$ at larger sizes for shift-squaring while it is always close to $1$ for shift-inversion). 

\begin{figure}
\centering
\begin{subfigure}{0.5\textwidth}
    \includegraphics[width=\textwidth]{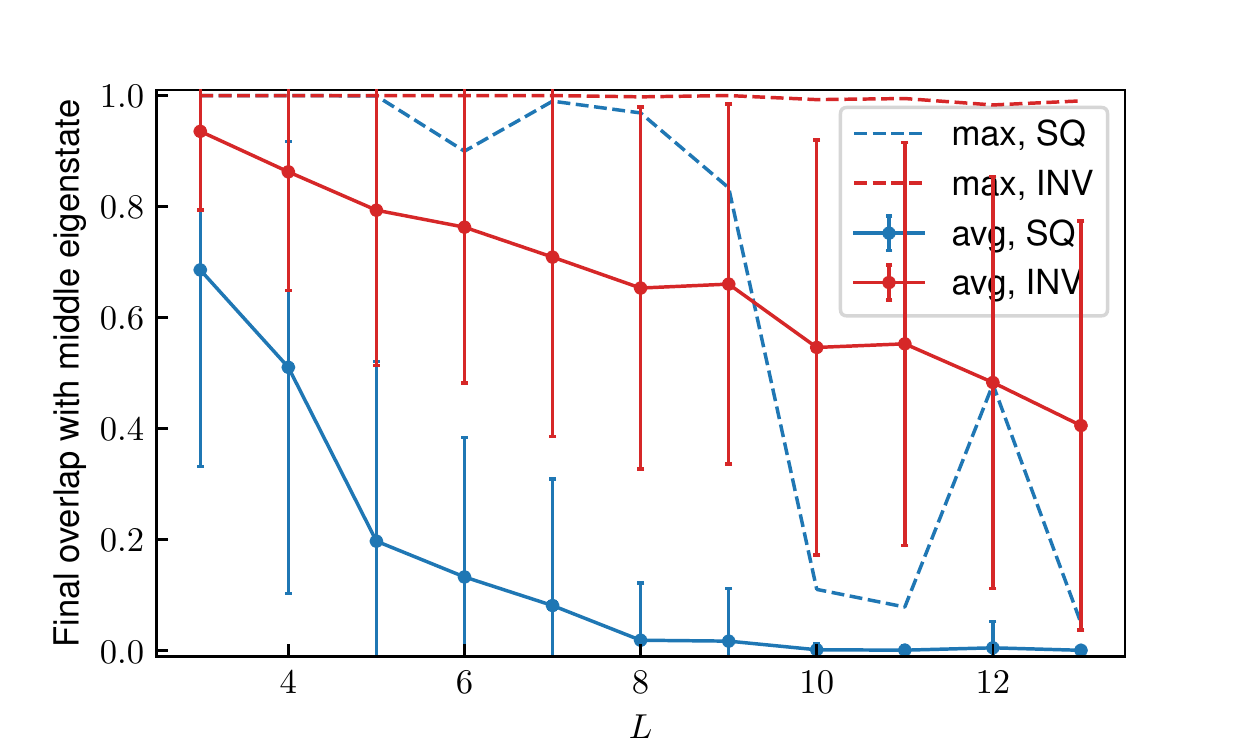}
\end{subfigure}
\caption{The overlap of the final optimized state as a function of system size $L$, for both the shift-invert algorithm (red) and the shift-squaring (blue). Each point is averaged over 100 disorder realizations and randomized initial parameters, with solid lines indicating the average overlap and dotted lines indicating the maximum. Vertical bars show the standard deviations. The maximum of the overlaps is always $< 1$ at larger sizes for shift-squaring while it is always close to $1$ for shift-inversion.}
\label{fig:mbl}
\end{figure}

\subsection{Cost/optimization landscapes}
The imprecision of the QLSP solver in Step. 3 leads to errors in the value of cost function $C(\theta) = \bra{b(\theta)} (H-s\mathbb{1})^{-1} \ket{b(\theta)}$ computed during the optimization procedure. To reduce this error, one can either increase the number of qubits in the eigenvalue register $m$ for the fault-tolerant implementation, or increase the expressivity of the ansatz $V(\phi)$ (e.g. by increasing the number of layers, and hence the number of variational parameters for a layered ansatz) for the near-term implementation.

We can observe the increase in precision by plotting the cost landscape (i.e. the value of $C(\theta)$ for a range of $\theta$) for different values of $m$ or layers. This is displayed in Fig.~(\ref{fig:landscape}) for a small $4$-qubit problem (corresponding to the electronic Hamiltonian of molecular hydrogen H$_2$), where we vary the value of a parameter (in our simulations, taken arbitrarily to be the second parameter $\theta_1$ as according to Fig.~(\ref{fig:numerical_ansatz})) while fixing that of all other parameters. As expected, cost landscapes with increasing precision (solid colored lines) (by increasing values of $m$ for HHL, or number of layers for VQLS) approach the exact value of $\bra{b(\theta)} (H-\sigma\mathbb{1})^{-1} \ket{b(\theta)}$ for when the solution of the QLSP is exact (black dotted line).

Another notable observation is that the features of the cost landscape (positions of local optima, periodicity of the landscape) are generally preserved, even at low precisions where the values of the cost function do not completely match the exact value. This is a desirable property, indicating that the optimization can tolerate the imprecise solutions of QLSPs during inverse expectation estimation, which provides a degree of robustness against errors from this source. 

\begin{figure*}
\centering
\begin{subfigure}{0.45\textwidth}
    \includegraphics[width=\textwidth]{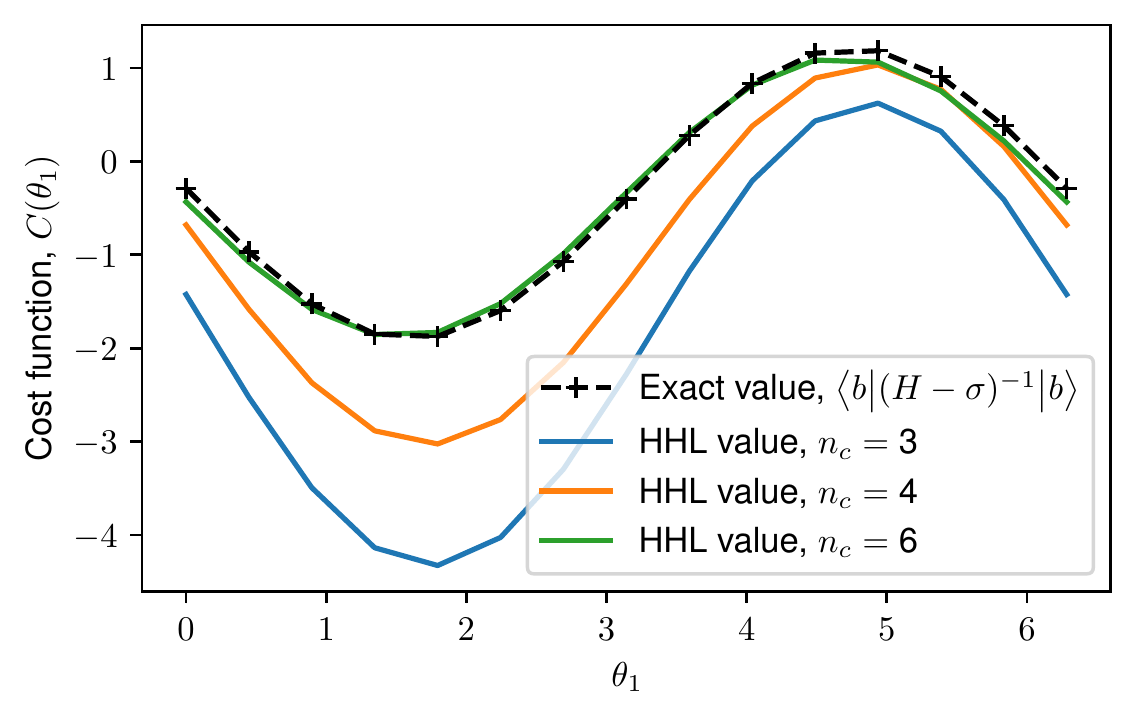}
\end{subfigure}
\begin{subfigure}{0.45\textwidth}
    \includegraphics[width=\textwidth]{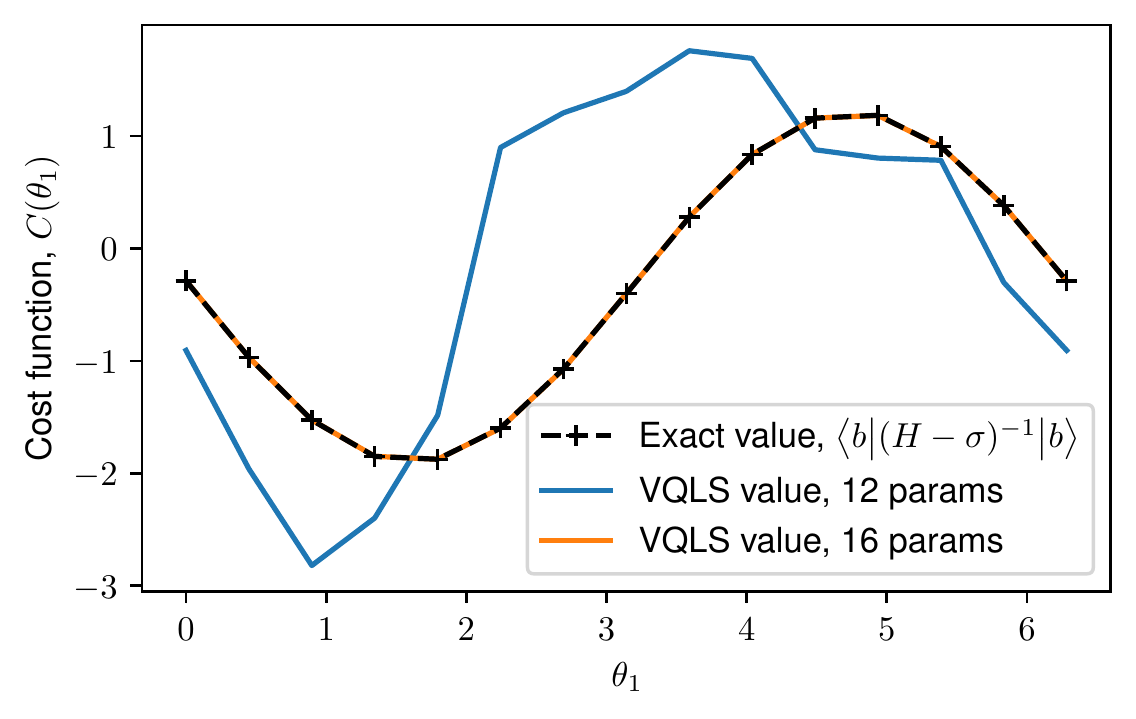}
\end{subfigure}
\caption{Plots displaying typical cost landscapes with increasing precision for HHL (left) or ansatz expressivity for VQLS (right). For VQLS, to ensure that the global optima is obtained for each QLSP solution, the best result from multiple optimizations with randomly initialized parameters is taken for each point of the landscape. Cost landscapes with increasing values of $m$ (for HHL) or number of layers (for VQLS) approach the exact value of $\bra{b(\theta)} (H-\sigma\mathbb{1})^{-1} \ket{b(\theta)}$ for when the solution of the QLSP is exact. We also observe that even at low values of $m$, the positions of the optima are generally preserved. Optimization over the imprecise landscapes will therefore still lead to correct optimal parameters $\theta^*$.}
\label{fig:landscape}
\end{figure*}

\section{Other applications} \label{appendix:applications}
This sections briefly describes examples in other areas in physics and chemistry that can benefit from the ability to efficiently prepare highly excited eigenstates of large systems, and the evaluation of the expectation value of the inverse of a Hermitian operator. 

Information beyond the ground state is generally required to access interesting physical properties (high temperature physics, dynamics, density of states, ...) of strongly correlated quantum systems. Transition amplitudes of an operator $M$ which takes the form $\expt{\lambda_i}{M}{\lambda_j}$ often appear in this context, and are challenging to evaluate on classical computers, which conventionally relies on exact diagonalization. On quantum computers, the authors of \cite{nakanishi2019subspace} consider the use of the variational SSVQE algorithm to prepare the excited eigenstates $\ket{\lambda_i}$ on a quantum computer. Similarly, \cite{endo2020calculation} considers using the same algorithm to evaluate the same quantity (appearing in Green's function through the spectral function in the Lehmann representation). In both cases, the preparation of excited eigenstates can be performed with our algorithm, especially in cases where highly excited eigenstates are involved, which becomes intractable for iterative algorithms such as SSVQE. Similarly, the proposed method to compute expectation values of inverses via Eq.~(\ref{eq:expt_relation}) can be applied to evaluate functionals of the form:
\begin{equation}
    G_{A,B}(\ket{\psi} ,z) = \expt{\psi}{O_A^{\dagger} (z-H)^{-1} O_B}{\psi},
\end{equation}
which appear in the computation of Green's function.

Conventional methods based on the variational principle are unable to solve for eigenstates corresponding to bound states of the Dirac equation (i.e. excited states of the Dirac Hamiltonian that are separated from low-energy states in the Dirac sea and high-energy states in the Fermi sea), since the spectrum of the Dirac Hamiltonian is unbounded both from above and below, a problem known as the variational collapse problem \cite{dolbeault_variational_nodate}. This problem can be overcome with classical shift-invert diagonalization techniques \cite{hill_solution_1994, fillion-gourdeau_algorithm_2017,hagino_iterative_2010}, where shift-inversion maps the bound states of the Dirac Hamiltonian to the ground state of a shift-inverted Hamiltonian. Our algorithm can thus be straightforwardly adapted for this purpose to prepare bound states of the Dirac equation on quantum computers. The same argument holds for a similar problem in models of interacting nucleons via Hartree-Fock-Bogoliubov theory \cite{tanimura2013application}.

\bibliography{sample}

\end{document}